
\documentclass[10pt,twocolumn,twoside]{IEEEtran}

\usepackage{cite}
\usepackage{citesort}
\usepackage{graphicx}
\usepackage{amsmath}
\usepackage{times}
\usepackage{subfigure}
\usepackage{latexsym}
\usepackage{graphicx}
\usepackage{bm}
\usepackage{amssymb}
\usepackage[center]{caption2}
\usepackage{stfloats}
\usepackage{cases}
\usepackage{array}
\usepackage{setspace}
\usepackage{fancyhdr}
\usepackage{citesort}

\newlength{\oldparindent}
\setlength{\oldparindent}{\parindent}
\newlength{\oldparskip}
\setlength{\oldparskip}{\parskip}

\interdisplaylinepenalty=2500

\jot = 1mm
\abovedisplayskip=2.2mm  
\belowdisplayskip=2.2mm%
\abovedisplayshortskip = 2.2mm%
\belowdisplayshortskip = 2.2mm%

\newtheorem{theorem}{Theorem}

\newtheorem{lemma}{Lemma}

\newcommand{\defeq}{\stackrel{\Delta}{=}}

\setlength{\textheight}{251mm} 

\setlength{\textwidth}{6.9in} 
\begin{document}
\title{Tomlinson-Harashima Precoding for Multiuser MIMO Systems with
Quantized CSI Feedback and User Scheduling}

\author{Liang Sun, \emph{Member}, \emph{IEEE} and Matthew R.\ McKay, \emph{Senior Member}, \emph{IEEE}
}
\IEEEaftertitletext{\vspace{-0.75\baselineskip}}

\maketitle


\begin{abstract}
This paper studies the sum rate performance of a low complexity
quantized CSI-based Tomlinson-Harashima (TH) precoding scheme for
downlink multiuser MIMO tansmission, employing greedy user
selection. The asymptotic distribution of the output signal to
interference plus noise ratio of each selected user and the
asymptotic sum rate as the number of users $K$ grows large are
derived by using extreme value theory. For fixed finite signal to
noise ratios and a finite number of transmit antennas $n_T$, we
prove that as $K$ grows large, the proposed approach can achieve the
optimal sum rate scaling of the MIMO broadcast channel. We also
prove that, if we ignore the precoding loss, the average sum rate of
this approach converges to the average sum capacity of the MIMO
broadcast channel. Our results provide insights into the effect of
multiuser interference caused by
quantized CSI on the multiuser diversity gain. 

\end{abstract}

\begin{keywords}
Tomlinson-Harashima precoding, LQ decomposition, random vector
quantization, zero-forcing.
\end{keywords}

\section{Introduction}
Multiple-input multiple-output (MIMO) communication systems have
received considerable attention in recent years, due to their
ability for providing significantly enhanced spectral efficiency and
link reliability compared with conventional single-antenna systems
\cite{teletar99,Wolniansky98}. In the downlink multiuser MIMO
systems, the spatial multiplexing capability of multiple transmit
antennas can be exploited to efficiently serve multiple users
simultaneously, rather than trying to maximize the capacity of a
single-user link.

The performance of a MIMO communication systems with spatial
multiplexing is severely impaired by the multi-stream interference
due to the simultaneous transmission of parallel data streams. To
reduce interference between the parallel data streams, processing at
the transmitter (precoding) as well as the receiver (equalization)
can be used. Precoding matches the transmission to the channel.
Linear schemes, such are those based on zero-forcing or minimum-mean
squared error (MMSE) criteria \cite{Haustein02,Joham02} or their
regularized variants \cite{Peel05}, are commonly used due to their
low complexity. However, such linear schemes typically incur an
appreciable capacity loss. Nonlinear processing at either the
transmitter or the receiver provides an alternative approach that
offers the potential for performance improvements. Nonlinear
approaches include schemes combining linear precoding with decision
feedback equalization (DFE) \cite{Yang94}, vector
perturbation\cite{Hochwald05}, Tomlinson-Harashima (TH) precoding
\cite{Windpassinger04,Fischer02}, and dirty paper coding (DPC)
\cite{Weingarten06}. Of these, DPC has been shown to be optimal in
terms of achieving the capacity region of the MIMO broadcast
channel\cite{Vishwanath03,Weingarten06}; however it is a highly
nonlinear technique involving joint optimization over a set of
power-constrained covariance matrices, and is therefore generally
deemed too complex for practical implementation \cite{Vishwanath03}.
A reduced complexity sub-optimal DPC scheme which combines linear ZF
with partial DPC, referred to as ZFDPC, was proposed for
single-antenna users in \cite{Caire03}, and generalized to
multiple-antenna users in \cite{Love07}.
Vector perturbation has been proposed for multiuser MIMO channel
model which can achieve rates near sum capacity \cite{Hochwald05}.
It has superior performance to linear precoding techniques as well
as TH precoding \cite{Hochwald05}. However, this method requires the
joint selection of a vector perturbation of the signal to be
transmitted to all the receivers, which is a multi-dimensional
integer-lattice least-squares problem. The optimal solution with an
exhaustive search over all possible integers in the lattice is
prohibitively complex. Although some sub-optimal solutions, such as
the sphere encoder\cite{Damen00}, exist, the complexity is still
much higher than TH precoding.

TH precoding employs modulo arithmetic and has a complexity
comparable to that of linear precoders. It was originally proposed
to combat inter-symbol interference (ISI) in highly dispersive
channels\cite{Harashima72} and can be readily extended to MIMO
channels \cite{Amico08,Windpassinger04}. Although it was shown in
\cite{Hochwald05} that TH precoding does not perform as well as
vector perturbation in general, it can achieve significantly better
performance than linear preprocessing algorithms\cite{Fischer02}.
Thus, it provides a good tradeoff between performance and
complexity. Note that TH precoding is strongly related to DPC. In
fact, it is a suboptimal implementation of the DPC scheme proposed
in \cite{Caire03}.

As with many precoding schemes, the major problem for systems with
TH precoding is the availability of the channel state information
(CSI) at the transmitter. In time division duplex (TDD) systems,
since the channel can be assumed to be reciprocal, the CSI at the
transmitter side can be easily obtained from the channel estimation
during reception. In frequency division duplex (FDD) systems, the
transmitter cannot estimate this information and the CSI has to be
quantized at the receivers and communicated from the receivers to
the transmitter via a feedback channel.

For linear precoding, there have been extensive research results for
MIMO systems with quantized CSI at the transmitter in FDD systems.
\cite{Yoo_limited,Jindal06}. In TDD systems, considering the channel
estimation errors, some robust algorithms can be exploited to
improve the communication performance with THP precoding
\cite{Yik12}. However, as far as we know, there has been little
attention paid to systems employing TH precoding based on quantized
CSI at the transmitter side. Exception include the previous work
\cite{Castro07}, in which TH precoding is designed based on
imperfect CSI where the quantization is performed using scalar
quantization, and the recent work\cite{Israa11}, in which TH
precoding was designed based on the available statistics of the
channel magnitude information (CMI) and the quantized channel
direction information (CDI). In this paper, we will focus on the
implementation of TH precoding in FDD systems.

For many practical precoding schemes, such as ZFDPC and ZF
beamforming (ZFBF), the maximum number of users that can be
supported simultaneously is no larger than the number of transmit
antennas. In practical systems, however, the number of users may be
quite large, and one must select a subset of users to serve at any
given time. Sum rate maximization is a common approach to seek the
subset of supported users. Greedy algorithms are commonly employed,
which can avoid prohibitively large complexity of finding the
optimal subset (see e.g.,
\cite{L.sun_multiuser,Yoo_limited,Love08,Mohammad08}).


In this paper, we design a multiuser spatial TH precoding based on
quantized CSI and a ZF criterion. 
In contrast to \cite{Windpassinger04}, where perfect CSI is assumed
at the transmitter side, here the feedforward filter as well as the
feedback filter are computed at the transmitter only based on the
quantized CDI received at the transmitter side.
For systems with more users than transmit antennas, we propose a low
complexity greedy user scheduling algorithm together with the
quantized CSI-based TH precoding method. We refer to this technique
as G-THP-Q. It is noted that our scheme, based on quantized CSI and
the scheme in \cite{Windpassinger04} based on perfect CSI are both
practical implementations of ZFDPC proposed in \cite{Caire03}. We
present an asymptotic performance analysis of the sum rate (as in
\cite{Yoo_limited,L.sun_multiuser,Love08,Mohammad08}) as the number
of users grows large. In particular, we demonstrate that G-THP-Q
achieves the optimal sum capacity scaling of the MIMO broadcast
channel in this asymptotic regime. In addition, we prove the more
powerful result that if we ignore the precoding loss of TH
precoding, then the difference between the sum rate of G-THP-Q and
the sum capacity of the MIMO broadcast channel converges to zero. We
also establish key insights into the effect of multiuser
interference (MUI) caused by quantized CSI on the multiuser
diversity gain. In particular, we show that, for the perfect CSI
case, whilst the coefficient of the first-order term $O(\log K)$ is
unaffected by the number of iterations $n$ of the proposed greedy
user selection algorithm, the coefficient of second-order term
$O(\log \log K)$ increases linearly with $n_T$ and decreases
linearly with $n$ respectively. In contrast, for the quantized CSI
case, the coefficient of second-order term $O(\log \log K)$
decreases linearly with both $n_T$ and $n$.



\section{System Model}\label{sec:model}
We consider downlink multi-user MIMO systems where TH precoding
\cite{Harashima72} is used at the transmitter for multi-user
interference pre-subtraction. For simplicity, we assume $K$
decentralized users each with single antenna (though the extension
to allow multiple receive antennas is straightforward\footnote{In
\cite{Jindal08}, a receive antenna combining technique called
quantization-based combining is proposed for MU MIMO systems with
multiple-antenna users. For each user, the received signals on
his/her multiple antennas are linearly combined so that the
effective single vector output produced from the original MIMO
channel matrix is closest to one of the codewords in the
quantization codebook, thereby creating an effective single receive
antenna channel vector for each user. This effective channel vector
is quantized and fed back. With this, our proposed TH precoding
method (described in Subsection \ref{subsec:precoder_design}) may
then be performed based on the quantized effective downlink single
receive antenna channels.}), and the transmitter is equipped with
$n_T$ transmit antennas. Let $\mathcal{U} = \left\{1, \ldots,
K\right\} $ denote the set of indices of all $K$ users, and let
$\mathcal{S} \subset \mathcal{U} $ ($L := |\mathcal{S}| \leq n_T$)
be a subset of user indices determined by the scheduler for
transmission. The selected users are indexed by $\mathcal{S} =
\{\mathcal{S}(1),\ldots, \mathcal{S}(L)\}$. The vector $\mathbf{s} =
[s_1, \ldots, s_{L}] \in \mathbb{C}^{L}$ represents the modulated
signal vector, where $s_k$ is the modulated symbol for user
$\mathcal{S}(k)$. (Note that the specific user selection algorithm
will be discussed in Section
\ref{sec:asymptotic_sum_rate_perforamnce}.) Here we assume that in
each of the parallel data streams an $M$-ary square constellation
($M$ is a square number) is employed and the constellation set is
$\mathcal{A}= \big\{ s_I + \sqrt{-1}~s_Q |~s_I,s_Q   \in \pm 1
\sqrt{\frac{3}{2(M-1)}}, \pm 3 \sqrt{\frac{3}{2 (M-1)}}, \ldots, \pm
\left(\sqrt{M}-1 \right)  \sqrt{\frac{3}{2(M-1)}}\big\}$. In
general, the average transmit symbol energy is normalized, i.e.,
$\mathbb{E}\{|s_k|^2\} = 1$. $\mathbf{s}$ is fed to a backward
square matrix $\mathbf{B}$, which must be strictly lower triangular
to allow data precoding in a recursive fashion
\cite{Windpassinger04}. The construction of $\mathbf{B}$ will depend
on the level of CSI of the supported users available at the
transmitter side. In this way, since $\mathbf{B}$ is a function of
the channel matrix, the instantaneous transmit power can be greatly
increased. Thus, TH precoding modulo operation is introduced here to
ensure that the transmit symbols are mapped into the square region $
\mathcal{R} = \{x + \sqrt{-1}~y | x, y\in (-\tau, \tau)\}$, where
$\tau = \sqrt{M}\sqrt{\frac{3}{2(M-1)}}$. The modulo operator
$\text{MOD}_{\tau}(\cdot)$ acts independently over the real and
imaginary parts of its input as follows
\begin{equation}\label{eq:modM}
 \text{MOD}_{\tau} (x) = x -\tau \bigg\lfloor \frac{x+\tau}{2\tau}
\bigg\rfloor ,
\end{equation}
where $\lfloor z \rfloor$
is the largest integer not exceeding $z$. 
Considering the effect of the modulo operation, the channel signals
are given as
\begin{align}
x_1 = s_1, ~~x_k = s_k + d_k - \sum_{l=1}^{k-1}
[\mathbf{B}]_{k,l}x_l, ~ k =2, \ldots, L,\nonumber
\end{align}
where $d_k \in \left\{ 2 \tau (p_I + \sqrt{-1}~p_Q) | ~ p_I , p_Q
\in \mathbb{Z} \right\}$ is properly selected to ensure the real and
imaginary parts of $x_k$ to fall into $\mathcal{R}$
\cite{Windpassinger04}. The constellation of the modified data
symbols $ v_k : = s_k + d_k $ is simply the periodic extension of
the original constellation along the real and imaginary axes.
Equivalently, we have
\begin{align}
\mathbf{v} &= \mathbf{C} \mathbf{x},
\end{align}
where $\mathbf{v} = [v_1,v_2, \cdots, v_{L}]^T$ and $\mathbf{C} :=
\mathbf{B} +\mathbf{I}$. We will make the standard assumption that
the elements of $\mathbf{x}$ are almost uncorrelated and uniformly
distributed over the Voronoi region of the constellation $\mathcal
{R}$. Such a model becomes more precise as $n_T$ increases
\cite[\emph{Theorem} 3.1]{Fischer02}. Compared to transmission of
symbols taken from the constellation $\mathcal {R}$, this leads to a
somewhat increased transmit power quantified by the \emph{precoding
loss} \cite{Windpassinger04}. With $\mathbb{E}\left\{
\mathbf{s}\mathbf{s}^H \right\} = \mathbf{I}$, the covariance of
$\mathbf{x}$ can be accurately approximated as $\mathbf{R_x} =
\frac{M}{M-1}\mathbf{I}$\cite{Fischer02}. Moreover, the induced
shaping loss by the non-Gaussian signaling leads to the fact that
the achievable rate can only be obtained up to $1.53$ dB from the
channel capacity \cite{Wesel98}. However, as indicated in
\cite{Windpassinger04}, the shaping loss can be bridged by
higher-dimensional precoding lattices. A scheme named ``inflated
lattice'' precoding has been proved to be capacity-achieving in
\cite{Uri04}. Thus, following \cite{Windpassinger04}, we will ignore
the shaping loss in this work.

Prior to transmission, a spatial channel pre-equalization is
performed using a feedforward precoding matrix $\mathbf{F} \in
\mathbb{C}^{n_T\times {L}} $. As for $\mathbf{B}$, $\mathbf{F}$ is
also designed based on the level of CSI available to the
transmitter. Throughout this work, we assume equal power allocation
to all supported users. The received signals of all the supported
users can be written in vector form as
\begin{equation}
\mathbf{r}_{\mathcal{S}} =
\sqrt{\frac{P}{\kappa}}\mathbf{H}_{\mathcal{S}}\mathbf{F}\mathbf{x}
+ \mathbf{n}_{\mathcal{S}},
\end{equation}
where $\kappa$ is used for transmit power normalization,
$\mathbf{H}_{\mathcal{S}} = \left[ \mathbf{h}_{\mathcal{S}(1)}^T,
\cdots, \mathbf{h}_{\mathcal{S}(L)}^T \right]^T $ is the equivalent
flat fading channel matrix consisting of all selected users' channel
vectors, and $\mathbf{h}_{\mathcal{S}(k)} \in \mathbb{C}^{n_T} $ is
the channel from the transmitter to user\footnote{The TH precoding
order of the users is ${\mathcal{S}(1)}, {\mathcal{S}(2)}, \ldots,
{\mathcal{S}(L)}$.} ${\mathcal{S}(k)}$. We assume that
$\mathbf{n}_{\mathcal{S}}$ contains the white additive Gaussian
noise at all receivers with covariance
$\mathbf{R}_{\mathbf{n}_{\mathcal{S}}} = \mathbf{I}$, without loss
of generality. Each receiver compensates for the channel gain by
dividing the received signal by a factor $g_k$ prior to the modulo
operation. Let $\mathbf{G} = \text{diag} \left\{g_{1,1}, \ldots,
g_{L,L} \right\}$. Then, the signal vector after channel gain
compensation is
\begin{align}
\mathbf{y}_{\mathcal{S}} & = \mathbf{G} \left (
\sqrt{\frac{P}{\kappa}} \mathbf{H}_{\mathcal{S}}\mathbf{F}\mathbf{x}
+ \mathbf{n}_{\mathcal{S}} \right ). \nonumber
\end{align}

Throughout the paper, we assume (as in
\cite{Yoo_limited,Love08,Mohammad08}) that (i) each user can obtain
perfect knowledge of its own CSI through channel estimation and
feeds back this information to the transmitter via a
rate-constrained feedback link with zero delay, and (ii) the
feedback information can be perfectly received at the transmitter.
In the following subsections, we describe how to determine the TH
precoding matrix $\mathbf{B}$, the feedforward precoding matrix
$\mathbf{F}$ and the compensation matrix $\mathbf{G}$ with quantized
CSI feedback at the transmitter side, based on the selected user set
$\mathcal{S}$.

\section{Precoder Design and User Scheduling Algorithm}

\subsection{Precoder Design with Quantized CSI at the Transmitter}
\label{subsec:precoder_design} In practical systems, perfect CSI is
never available at the transmitter. In a FDD system, the transmitter
obtains CSI of downlink through the limited feedback of $B$ bits by
each receiver. To the best of our knowledge, there has been little
work concerning the design of TH precoding with limited feedback CSI
at the transmitter. Previous work in \cite{Castro07} utilized a
scalar quantization method to quantize each element of the
rank-reduced estimated downlink channels. In \cite{Israa11}, the
random vector quantization (RVQ) method is utilized to quantize the
CDI of the estimated channels. The TH precoding designs in both
papers were based on the MMSE criteria, and depended highly on the
knowledge of the distribution of channel fading. In this work,
following the studies of quantized CSI feedback in
\cite{Jindal06,Yoo_limited}, the channel direction vector is
quantized at each receiver, and the TH precoding is solely designed
based on the information of the quantized channel direction vectors.
Given the quantization codebook with $m$ codewords $\mathcal{W}=
\{\mathbf{w}_1,\cdots, \mathbf{w}_{m}\}$ ($\mathbf{w}_{i} \in
\mathcal{C}^{1\times n_T} $), which is known to both the transmitter
and to all receivers, user $k$ selects the quantized direction
vector as follows:
\begin{equation}
{\hat{\mathbf{h}}_k} = \text{arg} \max_{\mathbf{w}_i
\in\mathbb{W}}\{ |\bar{{\mathbf{h}}}_k \mathbf{w}_i|^2\},
\end{equation}
where $\bar{{\mathbf{h}}}_k = \frac{\mathbf{h}_k}{
\|\mathbf{h}_k\|}$ is the channel direction vector of user $k$. The
corresponding index is fed back to the transmitter via an error and
delay-free feedback channel.

In this work, we use a RVQ codebook, in which the $m$ quantization
vectors are independently and isotropically distributed on the
$n_T$--dimensional complex unit sphere \cite{Jindal06}. 
Using the result in \cite{Jindal06}, for user $k$ we have
\begin{align}\label{eq:channel_decom}
\bar{\mathbf{h}}_k = \hat{\mathbf{h}}_k\cos\theta_k +
\tilde{\mathbf{h}}_k\sin\theta_k,
\end{align}
where $\cos^2\theta_k = |\bar{\mathbf{h}}_k\hat{\mathbf{h}}_k^H|^2$,
whilst $\tilde{\mathbf{h}}_k \in \mathcal{C}^{ 1\times M}$ is a unit
norm vector isotropically distributed in the orthogonal complement
subspace of $\hat{\mathbf{h}}_k$ and is independent of
$\sin\theta_k$. Then $\mathbf{H}_{\mathcal{S}}$ can be written as
\begin{equation}
\mathbf{H}_{\mathcal{S}} = \mathbf{\Gamma}_{\mathcal{S}} \left(
\mathbf{\Phi}_{\mathcal{S}} \hat{\mathbf{H}}_{\mathcal{S}} +
\mathbf{\Omega}_{\mathcal{S}} \tilde{\mathbf{H}}_{\mathcal{S}}
\right),
\end{equation}
where $\mathbf{\Gamma}_{\mathcal{S}} = \text{diag}
\left(\rho_{\mathcal{S}(1)},\ldots ,\rho_{\mathcal{S}(L)} \right) $
with $ \rho_{\mathcal{S}(k)} = \|\mathbf{h}_{\mathcal{S}(k)}\| $,
$\mathbf{\Phi}_{\mathcal{S}}= \text{diag}
\big(\cos\theta_{\mathcal{S}(1)},\ldots ,\cos\theta_{\mathcal{S}(L)}
\big)$ and $\mathbf{\Omega}_{\mathcal{S}} =\text{diag}
\big(\sin\theta_{\mathcal{S}(1)},\ldots
,\sin\theta_{\mathcal{S}(L)}\big) $, $\hat{\mathbf{H}}_{\mathcal{S}}
= \left [\hat{\mathbf{h}}_{\mathcal{S}(1)}^T, \ldots
,\hat{\mathbf{h}}_{\mathcal{S}(L)}^T \right ]^T $ and
$\tilde{\mathbf{H}}_{\mathcal{S}} = \left
[\tilde{\mathbf{h}}_{\mathcal{S}(1)}^T, \cdots
,\tilde{\mathbf{h}}_{\mathcal{S}(L)}^T \right ]^T $. For simplicity
of analysis, in this work we consider the quantization cell
approximation used in \cite{Yoo_limited}, where each quantization
cell is assumed to be a Voronoi region of a spherical cap with
surface area approximately equal to $\frac{1}{m}$ of the total
surface area of the $n_T$-dimensional unit sphere. For a given
codebook $\mathbb{W}$, the actual quantization cell for vector
$\mathbf{w}_i$, $\mathcal{R}_i = \big\{\bar{\mathbf{h}}:
|\bar{\mathbf{h}} \mathbf{w}_i|^2 \geq |\bar{\mathbf{h}}
\mathbf{w}_j|^2, \forall ~ i \neq j \big\} $, is approximated as $
\tilde{\mathcal{R}}_i = \left\{\bar{\mathbf{h}}: |\bar{\mathbf{h}}
\mathbf{w}_i| \geq  1- \delta \right\}$, where $\delta = 2^{-
\frac{B}{n_T -1}}$.

With the quantized CDI at the transmitter, the transmitter obtains
the feedforward precoding matrix $\mathbf{F}$ and feedback matrix
$\mathbf{B} $ through the LQ decomposition of the equivalent channel
matrix $\hat{\mathbf{H}}_{\mathcal{S}} $ as $
\hat{\mathbf{H}}_{\mathcal{S}} =
\hat{\mathbf{R}}_{\mathcal{S}}\hat{\mathbf{Q}}_{\mathcal{S}}, $
where $\hat{\mathbf{R}}_{\mathcal{S}} \in \mathbb{C}^{L \times L}$
is a lower left-triangular matrix and
$\hat{\mathbf{Q}}_{\mathcal{S}} \in \mathbb{C}^{L \times n_T}$ is a
semi-unitary matrix with orthonormal rows which satisfies
$\hat{\mathbf{Q}}_{\mathcal{S}} \hat{\mathbf{Q}}_{\mathcal{S}}^H =
\mathbf{I}$. In addition, we denote $\hat{r}_{i,j}$ as the
$(i,j)$-th element of matrix $\hat{\mathbf{R}}_{\mathcal{S}}$ and
$\hat{\mathbf{q}}_{l}$ as the $l$-th row of matrix
$\hat{\mathbf{Q}}_{\mathcal{S}}$. Then we have $\mathbf{F} =
\hat{\mathbf{Q}}_{\mathcal{S}}^H $ and $\mathbf{B}
=\hat{\mathbf{\Delta}}\hat{\mathbf{H}}_{\mathcal{S}}\mathbf{F}
-\mathbf{I} = \hat{\mathbf{\Delta}} \hat{\mathbf{R}} -\mathbf{I}$
with $\hat{\mathbf{\Delta}}= \text{diag} \left( \hat{r}_{1,1}^{-1},
\ldots, \hat{r}_{L,L}^{-1} \right)$. In addition, the scaling matrix
at the receivers is
\begin{equation}\label{eq:scaling_matrix}
\mathbf{G} = \sqrt{\frac{\kappa}{P}} \left(
\mathbf{\Gamma}_{\mathcal{S}} \mathbf{\Phi}_{\mathcal{S}}
\text{diag}\left\{\hat{\mathbf{R}}_{\mathcal{S}}
\right\}\right)^{-1}.
\end{equation}
According to the transmit power constraint
$\frac{P}{\kappa}\text{Tr}
\{\mathbf{F}\mathbf{R}_{\mathbf{x}}\mathbf{F}^H\}
=\frac{P}{\kappa}\frac{M}{M-1}\text{Tr}\{\mathbf{F}\mathbf{F}^H\} =
\frac{P}{\kappa} \frac{M}{M-1} L = P$, we have $\kappa =
\frac{M}{M-1} L$. After scaling, the effective received vector
$\hat{\mathbf{y}}_{\mathcal{S}}$ can be further written as
\begin{align}\label{eq:Rx_sig_limited}
& \hat{\mathbf{y}}_{\mathcal{S}}  =  \mathbf{G}\left(
\sqrt{\frac{P}{\kappa}}\mathbf{H}_{\mathcal{S}}
\mathbf{F}_{\mathcal{S}} \mathbf{x}
+\mathbf{n}_{\mathcal{S}}\right)\nonumber\\
&= \mathbf{G} \sqrt{\frac{P}{\kappa}}\mathbf{\Gamma}_{\mathcal{S}}
\left( \mathbf{\Phi}_{\mathcal{S}} \hat{\mathbf{H}}_{\mathcal{S}} +
\mathbf{\Omega}_{\mathcal{S}} \tilde{\mathbf{H}}_{\mathcal{S}}
\right) \mathbf{F} \mathbf{x} + \mathbf{G}\mathbf{n}_{\mathcal{S}}\nonumber\\
&= \mathbf{v} + \left( \mathbf{\Phi}_{\mathcal{S}}
\text{diag}\left\{\hat{\mathbf{R}}_{\mathcal{S}}
\right\}\right)^{-1}\mathbf{\Omega}_{\mathcal{S}}
\tilde{\mathbf{H}}\hat{\mathbf{Q}}^H\mathbf{x} \nonumber \\ &
\hspace{1cm }+ \sqrt{\frac{\kappa}{P}}\left(
\mathbf{\Gamma}_{\mathcal{S}}
\mathbf{\Phi}_{\mathcal{S}}\text{diag}\left\{\hat{\mathbf{R}}_{\mathcal{S}}
\right\}\right)^{-1}\mathbf{n}_{\mathcal{S}},
\end{align}
where we have used the relationship $\mathbf{v} =
\left(\text{diag}\left\{\hat{\mathbf{R}}_{\mathcal{S}}
\right\}\right)^{-1}\hat{\mathbf{R}}_{\mathcal{S}} \mathbf{x}$. The
first term is the useful signal for all the selected users, the
second term is the interference caused by quantized CSI, whilst the
last term is the effective noise after processing. At the receivers,
each symbol in $\mathbf{y}_{\mathcal{S}}$ is first modulo reduced
into the boundary region of the signal constellation $\mathcal{A}$.
A slicer of the original constellation will follow the modulo
operation to detect the received signals. According to
(\ref{eq:Rx_sig_limited}), the signal to interference plus noise
ratio (SINR) $\gamma_{\mathcal{S}(k)}$ for user $\mathcal{S}(k)$
($k=1, \ldots,{L}$) can be written as
\begin{align}\label{eq:SINR_limited}
&\gamma_{\mathcal{S}(k)} \nonumber \\ &= \frac{1}{
\frac{\sin^2\theta_{\mathcal{S}(k)}}{|\hat{r}_{k,k}|^2\cos^2\theta_{\mathcal{S}(k)}}
\|\tilde{\mathbf{h}}_{\mathcal{S}(k)}
\hat{\mathbf{Q}}_{\mathcal{S}}^H \|^2+
\frac{\kappa}{P}\frac{1}{\rho_{\mathcal{S}(k)}^2|\hat{r}_{k,k}|^2 \cos^2\theta_{\mathcal{S}(k)}} } \nonumber\\
&= \frac{\frac{P}{\kappa} \rho_{\mathcal{S}(k)}^2|\hat{r}_{k,k}|^2
\cos^2\theta_{\mathcal{S}(k)}}{ \frac{P}{\kappa}
\rho_{\mathcal{S}(k)}^2 \| \tilde{\mathbf{h}}_{\mathcal{S}(k)}
\hat{\mathbf{Q}}_{\mathcal{S}(k)}^H \|^2
\sin^2\theta_{\mathcal{S}(k)}+1 }.
\end{align}

In the next subsection, we present a greedy scheduling algorithm
which is combined with the proposed quantized CSI-based TH precoding
to determine the supported user set $\mathcal{S}$. Henceforth, this
strategy will be termed G-THP-Q.

\subsection{User Scheduling}\label{subsec:scheduler}

In the above subsection, the transceiver structures are obtained
based on the selected user set $\mathcal{S}$, which is determined by
the scheduler at the transmitter. Given the optimal user set
$\mathcal{S}$, with equal power allocation to each user, the sum
rate is
\begin{equation}\label{eq:throughtput}
R_{\text{G-THP-Q}} = \sum_{i=1}^{L} \log_2 \left(1 +
\frac{P}{\kappa} \gamma_{\mathcal{S}(k)}\right).
\end{equation}

In the following, we consider the problem of determining the
\emph{optimal} user set to maximize the sum rate given by
(\ref{eq:throughtput}). When $K \gg n_T$, to find the optimal user
set $\mathcal{S}$, for each $L \leq n_T$ an \emph{exhaustive search}
must be applied over all possible sets of $L$ user channels. In
addition, all permutations of a given user set must be considered
due to the fact with TH precoding that different orderings of a
given set of user channels yield different sum rates. For large
values of $K$ the complexity associated with this exhaustive search
is prohibitive in practice \cite{L.sun_multiuser,Yoo06}. To reduce
the complexity of user scheduling we adopt the greedy method which
has been extensively used in the literature
\cite{L.sun_multiuser,Yoo_limited,Love08,Mohammad08} and in
practical systems \cite{Farooq09}. Besides the CDI feedback, each
user also feeds back its channel magnitude $\rho_k^2 =
\|\mathbf{h}_k\|^2$, which will be used for scheduling. The proposed
user selection method iteratively selects a user by searching for a
set of users, based on the quantized CSI. Let $\mathcal{U}_n$ denote
the \emph{candidate set} at the $n$-th iteration. This set contains
the indices of all users which have not been selected previously.
Also, let $\mathcal{S}_n= \{\mathcal{S}(1), \ldots,
\mathcal{S}(n)\}$ denote the set of indices of the selected users
after the $n$-th iteration. The selection algorithm works as
follows. \vspace*{3mm} \hrule \vspace*{2mm} \textbf{G-THP-Q
(Algorithm 1)} \vspace*{2mm} \hrule \vspace*{2mm}
\begin{enumerate}
\item
\textbf{Initialization:}\\
Set $n=1$ and $\mathcal{U}_1 = \left\{1,2,\ldots,K\right\}$. \\
Let
\begin{equation}\label{eq:gamma_1}
\gamma_{k}(1) = \frac{\frac{P}{\kappa} \rho_{k}^2 \cos^2\theta_{k}}{
\frac{P}{\kappa} \rho_{k}^2 \sin^2\theta_{k}+1 },~~ k \in
\mathcal{U}_1
\end{equation}
The transmitter selects the first user as follows:
\begin{eqnarray}\label{eq:selection1}
\mathcal{S} (1) = \textrm{arg} \max_{k \in \mathcal{U}_1}
~\gamma_{k}(1) \, .
\end{eqnarray}
Set $\mathcal{S}_1 = \left\{\mathcal{S}(1) \right\}$, and define
$\hat{\mathbf{q}}_1 = \hat{\mathbf{h}}_{\mathcal{S}(1)}$.
\item
\textbf{While} $n \leq n_T$, $n \leftarrow n+1$. \\
Candidate set is $\mathcal{U}_n = \mathcal{U}_{n-1}\backslash
\mathcal{S}_{n-1} $. For each user $k \in \mathcal{U}_n$, denote
\begin{eqnarray}
\label{eq:Gram-Schmidt0}
\xi_{i} &=& \hat{\mathbf{h}}_{k}\hat{\mathbf{q}}_{i}^H,~~ i=1, \ldots, n-1\\
\label{eq:Gram-Schmidt1} \boldsymbol{\xi}_{k} &=&
\hat{\mathbf{h}}_{k} - \sum_{i=1}^{n-1}\xi_{i}\hat{\mathbf{q}}_{i}\\
\label{eq:gamma_k} \gamma_{k}(n) &=&
\frac{\frac{P}{\kappa}\rho^2_{k}
\parallel \boldsymbol{\xi}_{k}\parallel^2 \cos^2\theta_k}{\frac{P}{\kappa}\rho^2_{k} \sin^2\theta_k +1}.
\end{eqnarray}
Select the $n$-th supported user as follows:
\begin{eqnarray}\label{eq:selection2}
\mathcal{S}(n) &=& \text{arg} \max_{ k \in \mathcal{U}_n}
\gamma_{k}(n) \, .
\end{eqnarray}
Set $\mathcal{S}_n = \mathcal{S}_{n-1} \cup \{\mathcal{S}(n)\}$ and
\begin{eqnarray}\label{eq:Gram-Schmidt2}
\hat{\mathbf{q}}_n = \frac{\boldsymbol{\xi}_{\mathcal{S}(n)}}{
\parallel\boldsymbol{\xi}_{\mathcal{S}(n)}\parallel}.
\end{eqnarray}
\item
Let $\mathbf{F}= \hat{\mathbf{Q}}_{\mathcal{S}}^H =
\left[\hat{\mathbf{q}}_1^H,\cdots, \hat{\mathbf{q}}_n^H \right]$,
$\mathbf{G}$ is obtained by using (\ref{eq:scaling_matrix}). The
transmitter broadcasts to all users the indices of the selected
users; then performs TH precoding as discussed previously.
\end{enumerate}
\hrule  \vspace*{5mm} Note that it is obvious that $L = n_T$ users
are determined at the end of this user scheduling algorithm. In this
case, since $\hat{\mathbf{Q}}_{\mathcal{S}}$ becomes a unitary
matrix, if user $k \in \mathcal{U}_n$ is scheduled, the term $\|
\tilde{\mathbf{h}}_k \hat{\mathbf{Q}}_{\mathcal{S}}^H \|^2$ in the
denominator of (\ref{eq:SINR_limited}) for the $n$-th scheduled user
becomes a constant, i.e., $\| \tilde{\mathbf{h}}_k
\hat{\mathbf{Q}}_{\mathcal{S}}^H \|^2 = 1$. In addition, the power
allocated to each user is equal to $\frac{P}{n_T}$. Thus
(\ref{eq:gamma_k}) follows.

The following important relations can be observed. According to the
LQ decomposition of the ordered rows of
$\hat{\mathbf{H}}_{\mathcal{S}}$, as described by
(\ref{eq:Gram-Schmidt0}), (\ref{eq:Gram-Schmidt1}) and
(\ref{eq:Gram-Schmidt2}),
\begin{eqnarray}\label{eq:QR2}
\hat{\mathbf{h}}_{\mathcal{S}(n)}= ~
(\hat{\mathbf{h}}_{\mathcal{S}(n)}~\hat{\mathbf{q}}_n^H
)\hat{\mathbf{q}}_n + \sum_{j=1}^{n-1}(
\hat{\mathbf{h}}_{\mathcal{S}(n)} \hat{\mathbf{q}}_j^H )
\hat{\mathbf{q}}_j,
\end{eqnarray}
and $\hat{r}_{n,j} = \hat{\mathbf{h}}_{\mathcal{S}(n)}
\hat{\mathbf{q}}_j^H$, for $ j \leq n$. With
(\ref{eq:Gram-Schmidt1}),
\begin{eqnarray}
|\hat{r}_{n,n}|^2 = |\hat{\mathbf{h}}_{\mathcal{S}(n)}
\hat{\mathbf{q}}_n^H |^2 =
\parallel \boldsymbol{\xi}_{\mathcal{S}(n)} \parallel^2.
\end{eqnarray}
In addition, since $\|\hat{\mathbf{h}}_{\mathcal{S}(n)}\|^2 = 1$ and
$\hat{\mathbf{q}}_i, ~i=1,\ldots, n_T$ are orthonormal, it can be
easily shown that
\begin{eqnarray}\label{eq:sum_l}
\sum_{j=1}^{n} |\hat{r}_{n,j}|^2 = 1, ~~~~\text{for}~
n=1,2,\ldots,n_T.
\end{eqnarray}
These relations will be useful for the subsequent analysis.

\section{Sum Rate Analysis}\label{sec:asymptotic_sum_rate_perforamnce}

In this section, we investigate the sum rate achieved by the
proposed quantized CSI-based TH precoding scheme combined with the
greedy user scheduling algorithm described in Subsection
\ref{subsec:scheduler}. Besides the assumptions made in Section
\ref{sec:model}, we further assume the channels of all users are
subject to uncorrelated Rayleigh fading and, for simplicity, as in
\cite{Dimic05,Love08,Yoo_limited} all users are homogeneous and
experience statistically independent fading. We focus on
establishing asymptotic results as $K \to \infty$, whilst keeping
SNR and $n_T$ finite.


To analyze the sum rate of the system, we require the distribution
of the output SINR $\gamma_{\mathcal{S}(n)}$ of the selected user at
the $n$-th iteration of the user selection algorithm. The SINR for
\emph{arbitrary} user $k$ selected from the candidate set
$\mathcal{U}_n$ at the $n$-th iteration of \textbf{Algorithm 1} is
given by
\begin{equation}\label{eq:SINR_limited_sel}
\gamma_{k}(n)= \frac{\phi \rho_k^2 \omega_k(n) \cos^2\theta_k}{ \phi
\rho_k^2\sin^2\theta_k+1 }, ~~ k \in \mathcal{U}_n,
\end{equation}
with $\omega_k(n) = \parallel \boldsymbol{\xi}_{k}\parallel^2$ given
by (\ref{eq:Gram-Schmidt1}) and $\phi = \frac{(M-1) P}{M n_T}$.
Thus, let us first determine the \emph{common} distribution of
$\gamma_{k}(n)$ with $n = 1, \ldots, n_T$.

Starting with $n = 1$, $\gamma_{k}(1)$ in (\ref{eq:gamma_1}), $k=1,
2\ldots, K$, are independent and identically
distributed (i.i.d.) random variables 
whose common cumulative distribution function (c.d.f.) has been
obtained in\cite{Yoo_limited} in closed-form as\footnote{The
expression for $F_{\gamma(n)} (x)$ for $x < \frac{1}{\delta} -1$ is
more involved. Since only the \emph{tail} behavior (large $x$) of
$F_{\gamma(n)}(x)$ is used for analysis, throughout this paper it is
sufficient to focus on the case $~x \geq \frac{1}{\delta} -1$.}
\begin{equation}\label{eq:cdf_exact1}
F_{\gamma(1)} (x) = 1- \frac{2^B
e^{-\frac{x}{\phi}}}{(1+x)^{n_T-1}},~~~~x \geq \frac{1}{\delta} -1.
\end{equation}

For $n\geq 2$, it is more challenging to obtain the distribution of
$\gamma_{k}(n)$, $k \in \mathcal{U}_n$. Particularly, after the user
selection in (\ref{eq:selection2}) in Step $2$ at the previous
iteration (i.e., the $\left(n-1\right)$-th), the exact distribution
of the channel vectors in $\mathcal{U}_n$ is different from the
distributions of the channel vectors in $\mathcal{U}_l$, $ l \leq
n-1$. More specifically, for $n\geq 2$, the channels for users in
the candidate set $\mathcal{U}_n$ no longer behave statistically as
uncorrelated complex Gaussian vectors. We see from
(\ref{eq:SINR_limited_sel}) that, compared with $\gamma_{k}(1)$,
$\gamma_{k}(n)$ ($n\geq 2$) involves an additional variable
$\omega_k(n) = \parallel \boldsymbol{\xi}_{k}\parallel^2$. For the
reasons stated above, the exact distributions of $\omega_k(n)$,
$\rho_k^2$, $\cos^2\theta_k$ and $ \sin^2\theta_k$ for $k\in
\mathcal{U}_n, n\geq 2$ are currently unknown and appear difficult
to derive analytically. Thus, we can continue the analysis by
considering the ``large-user'' regime. In particular, using similar
techniques as in \cite{Wang_report07,Love08,L.sun_multiuser} we can
strictly prove that, when the number of users in the candidate set
$\mathcal{U}_n$ is large, removing one user from $\mathcal{U}_n$ has
negligible impact on the statistical properties of the remaining
users' channel vectors.

\begin{lemma}\label{lemma:iid_property}
At the $n$-th iteration of \textbf{Algorithm 1}, $ 2\leq n \leq M$,
conditioned on the previously selected channel vectors
$\mathbf{h}_{\mathcal{S}(1)}, \ldots,
\mathbf{h}_{\mathcal{S}(n-1)}$, the channel vectors in
$\mathcal{U}_n$ are i.i.d. Furthermore, as the size of the candidate
user set $\mathcal{U}_n$ grows large (i.e., $\lim_{K \to \infty}
|\mathcal{U}_n| = \infty$), conditioned on the previously selected
channels, the distribution of the channel vector of each user in
$\mathcal{U}_n$ converges to the distribution of uncorrelated
complex Gaussian vector.
\end{lemma}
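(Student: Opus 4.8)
\emph{Proof sketch.} The plan is to prove both assertions together by induction on the iteration index $n$, the engine being the elementary observation that a greedy $\arg\max$ selection becomes, once the winning user's channel is held fixed, a \emph{threshold} event that factorizes over the surviving users. At $n=1$ the vectors $\mathbf h_1,\dots,\mathbf h_K$ are i.i.d.\ $\mathcal{CN}(\mathbf 0,\mathbf I)$ and, given the RVQ codebook (which is independent of the channels), so are the quantized triples $(\hat{\mathbf h}_k,\cos^2\theta_k,\rho_k^2)$; this is the base case. For the inductive step, write $j_l:=\mathcal{S}(l)$ and condition on an arbitrary realization of \textbf{Algorithm~1} through step $n-1$, i.e.\ on the indices $\{j_1,\dots,j_{n-1}\}$ together with the channel values $\mathbf h_{j_l}=\mathbf g_l$. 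This freezes the orthonormal vectors $\hat{\mathbf q}_1,\dots,\hat{\mathbf q}_{n-1}$ produced by the Gram--Schmidt recursion \eqref{eq:Gram-Schmidt1}, so that each $\gamma_k(l)$ in \eqref{eq:SINR_limited_sel} turns into a fixed deterministic function $\gamma^{(l)}(\mathbf h_k)$ of that one user's channel, and each threshold $t_l:=\gamma^{(l)}(\mathbf g_l)$ becomes a fixed number.

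With continuous fading the maximizers are a.s.\ unique, so the conditioning event can be rewritten as
\[
E=\bigcap_{l=1}^{n-1}\{\mathbf h_{j_l}=\mathbf g_l\}\ \cap\ \bigcap_{l=1}^{n-1}\ \bigcap_{k\in\mathcal{U}_l\setminus\{j_l\}}\bigl\{\gamma^{(l)}(\mathbf h_k)<t_l\bigr\}.
\]
For a fixed surviving index $i\in\mathcal{U}_n=\{1,\dots,K\}\setminus\{j_1,\dots,j_{n-1}\}$, the only constraints in $E$ involving $\mathbf h_i$ are $\{\gamma^{(l)}(\mathbf h_i)<t_l\}$, $l=1,\dots,n-1$; all the others concern other, independent, users' channels or the frozen values $\mathbf g_l$. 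Since the constraints restricting the members of $\mathcal{U}_n$ form a product over $k\in\mathcal{U}_n$ of single-user events, and both the common marginal and the maps $\gamma^{(l)}$ are the same for every user, the conditional law of $(\mathbf h_i)_{i\in\mathcal{U}_n}$ given $E$ is a product of identical factors
\[
\mu_K:=\mathrm{Law}\Bigl(\mathbf h\ \Big|\ \bigcap_{l=1}^{n-1}\bigl\{\gamma^{(l)}(\mathbf h)<t_l\bigr\}\Bigr),\qquad \mathbf h\sim\mathcal{CN}(\mathbf 0,\mathbf I),
\]
which proves the i.i.d.\ claim (the passage from conditioning on the random pair $(j_l,\mathbf g_l)_l$ to this statement is integration against its joint law, the displayed factorization holding for every realization).

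For the second claim I would show $\mu_K\to\mathcal{CN}(\mathbf 0,\mathbf I)$ by arguing that the conditioning event in the display above becomes vacuous as $K\to\infty$, which reduces to showing that each selection threshold $t_l$, $l=1,\dots,n-1$, tends to infinity. Here one cannot simply invoke ``maximum of i.i.d.\ variables'': since $\omega_k(l)=\|\boldsymbol{\xi}_k\|^2=1-\sum_{i<l}|\hat{\mathbf h}_k\hat{\mathbf q}_i^H|^2$ is nonincreasing in $l$, one has $\gamma^{(l)}(\mathbf h)\le\gamma^{(l-1)}(\mathbf h)$ pointwise, so the thresholds are nested, $t_1\ge t_2\ge\cdots\ge t_{n-1}$, and the one we must push to infinity is the smallest. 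I would handle this by a nested induction: granting $t_{l-1}\to\infty$ in probability, conditionally on the first $l-1$ selections the values $\{\gamma_k(l)\}_{k\in\mathcal{U}_l}$ are i.i.d.\ with a common c.d.f.\ that converges pointwise, as the constraint $\gamma^{(l-1)}<t_{l-1}$ loosens, to the c.d.f.\ of $\gamma^{(l)}(\mathbf h)$ under $\mathbf h\sim\mathcal{CN}(\mathbf 0,\mathbf I)$; since $|\mathcal{U}_l|\ge K-n_T+1\to\infty$ and the latter law has unbounded support (for $l=1$ this is already visible from the tail in \eqref{eq:cdf_exact1}, which never reaches $1$; in general, for any fixed Gram--Schmidt frame, the independent events ``$\rho_k^2$ large'' and ``$\theta_k$ small'' make $\gamma^{(l)}(\mathbf h_k)$ exceed any prescribed level with positive probability), the maximum diverges. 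Granting all $t_l\to\infty$, a union bound gives $\Pr_{\mathbf h\sim\mathcal{CN}}\bigl(\bigcap_{l=1}^{n-1}\{\gamma^{(l)}(\mathbf h)<t_l\}\bigr)\to1$, and since conditioning on an event whose probability approaches $1$ perturbs a law by at most that deficit in total variation, $\mu_K\to\mathcal{CN}(\mathbf 0,\mathbf I)$.

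The step I expect to be the main obstacle is the divergence of the thresholds together with the measure-theoretic bookkeeping around the random winner index: one must (i) justify conditioning on the measure-zero events $\{\mathbf h_{j_l}=\mathbf g_l\}$ via regular conditional distributions and verify that, after the winners' channels are frozen, nothing beyond the threshold inequalities couples the survivors, and (ii) propagate the ``unbounded support'' property of the per-iteration SINR law along the iteration, which is precisely what forces each (decreasing) threshold still to escape to infinity as the candidate pool grows. The remaining estimates — the pointwise limit of the conditional c.d.f., the tail bound making the maximum diverge, and the total-variation inequality — are routine.
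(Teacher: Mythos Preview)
The paper does not actually prove this lemma. Immediately before the statement it writes ``using similar techniques as in \cite{Wang_report07,Love08,L.sun_multiuser} we can strictly prove that\ldots'' and then states the result with no accompanying appendix; the argument is deferred entirely to those references. So there is no in-paper proof to compare against.

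That said, your proposal \emph{is} the standard argument those references carry: freeze the winners' channels so that each greedy $\arg\max$ becomes a deterministic threshold event, observe that the surviving-user constraint then factorizes into identical single-user events (giving the conditional i.i.d.\ structure), and finally argue that every threshold $t_l$ diverges as the candidate pool grows, so the conditioning becomes asymptotically vacuous and the conditional law returns to $\mathcal{CN}(\mathbf 0,\mathbf I)$ in total variation. Your identification of the two delicate points---the regular-conditional-distribution bookkeeping for the measure-zero winner events, and the inductive propagation of threshold divergence through the stages---is accurate and matches what the cited papers do. One minor remark: the monotonicity $t_1\ge t_2\ge\cdots\ge t_{n-1}$ is indeed correct (for $k\in\mathcal{U}_l$ one has $\gamma^{(l)}(\mathbf h_k)\le\gamma^{(l-1)}(\mathbf h_k)<t_{l-1}$ a.s.), but it is not strictly needed, since your nested induction already establishes $t_l\to\infty$ for each $l$ separately.
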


Equipped with \emph{Lemma} \ref{lemma:iid_property}, at the $n$-th
iteration, from the point of view of the users in $\mathcal{U}_n$,
the channel vectors of the selected users in the previous iterations
(i.e., $\mathbf{h}_{\mathcal{S}(n)}, \ldots,
\mathbf{h}_{\mathcal{S}(n)}$) appear to be \emph{randomly} selected.
Thus, the orthonormal basis $\hat{\mathbf{q}}_1, \ldots,
\hat{\mathbf{q}}_{n-1}$ (generated from
$\hat{\mathbf{h}}_{\mathcal{S}(n)}, \ldots,
\hat{\mathbf{h}}_{\mathcal{S}(n)}$) appears independent of the
channel vectors of the users in $\mathcal{U}_n$. This greatly
simplifies the following analysis.

Firstly, we will derive the distribution of the random variable $
\omega_k(n) $ for $k \in \mathcal{U}_n$ $n \geq 2$, which is given
in the following lemma.
\begin{lemma}\label{lemma:pdf_omega_n}
Let $ k \in \mathcal{U}_{n}$, $n \in \{ 2, \ldots, M \}$. For
sufficiently large $K$, $\omega_k(n) =\parallel
\boldsymbol{\xi}_{k}\parallel^2$ follows a beta distribution with
shape parameters $(n_T-n+1)$ and $(n-1)$, which is denoted as
$\omega_k(n) \sim \mathrm{Beta}(n_T-n+1,n-1)$ whose probability
density function (p.d.f.) is given as follows:
\begin{eqnarray}\label{eq:pdf_omega_n}
f_{\omega(n)} (x) = \frac{1}{\beta(n_T-n+1,n-1)}x^{n_T-n}(1-
x)^{n-2},
\end{eqnarray}
where $\beta(a,b) = \int_{0}^{1} t^{a-1} (1-t)^{b-1} {\rm d}t$ is
the beta function \cite{Gradshteyn2000}.
\end{lemma}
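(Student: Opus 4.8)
The plan is to reduce the statement to the classical fact that the squared norm of the orthogonal projection of an isotropically--distributed unit vector in $\mathbb{C}^{n_T}$ onto a fixed subspace of complex dimension $d$ is $\mathrm{Beta}(d,\,n_T-d)$--distributed, here with $d=n_T-n+1$.

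First I would set up the conditioning. At the $n$-th iteration, condition on the previously selected channels $\mathbf{h}_{\mathcal{S}(1)},\ldots,\mathbf{h}_{\mathcal{S}(n-1)}$; then the vectors $\hat{\mathbf{q}}_1,\ldots,\hat{\mathbf{q}}_{n-1}$ produced by the Gram--Schmidt recursion (\ref{eq:Gram-Schmidt0})--(\ref{eq:Gram-Schmidt2}) are deterministic and span an $(n-1)$-dimensional subspace of $\mathbb{C}^{n_T}$; call its orthogonal complement $V$, so that $\dim_{\mathbb{C}}V=n_T-n+1$. By \emph{Lemma}~\ref{lemma:iid_property} and the discussion following it, for sufficiently large $K$ the channels of the users in $\mathcal{U}_n$ are i.i.d.\ and, as $K\to\infty$, each direction $\bar{\mathbf{h}}_k$ becomes isotropically distributed on the unit sphere of $\mathbb{C}^{n_T}$; since the RVQ codebook $\mathbb{W}$ is isotropic and drawn independently of the channels, the quantized direction $\hat{\mathbf{h}}_k$ is then also isotropically distributed on the unit sphere and, from the point of view of user $k\in\mathcal{U}_n$, independent of the fixed basis $\hat{\mathbf{q}}_1,\ldots,\hat{\mathbf{q}}_{n-1}$.

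Next I would rewrite $\omega_k(n)$ as a projection norm. Using (\ref{eq:Gram-Schmidt1}), the orthonormality of $\hat{\mathbf{q}}_1,\ldots,\hat{\mathbf{q}}_{n-1}$, and $\|\hat{\mathbf{h}}_k\|=1$,
\[
\omega_k(n)=\|\boldsymbol{\xi}_k\|^2=\|\hat{\mathbf{h}}_k\|^2-\sum_{i=1}^{n-1}\big|\hat{\mathbf{h}}_k\hat{\mathbf{q}}_i^H\big|^2=1-\sum_{i=1}^{n-1}\big|\hat{\mathbf{h}}_k\hat{\mathbf{q}}_i^H\big|^2,
\]
i.e.\ $\omega_k(n)=\|P_V\hat{\mathbf{h}}_k\|^2$ is the squared norm of the orthogonal projection of $\hat{\mathbf{h}}_k$ onto $V$. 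Writing $\hat{\mathbf{h}}_k=\mathbf{g}/\|\mathbf{g}\|$ with $\mathbf{g}\sim\mathcal{CN}(\mathbf{0},\mathbf{I}_{n_T})$ (valid because $\hat{\mathbf{h}}_k$ is isotropic) and applying a unitary change of basis so that $V$ is spanned by the first $n_T-n+1$ coordinate directions,
\[
\omega_k(n)=\frac{\sum_{i=1}^{n_T-n+1}|g_i|^2}{\sum_{i=1}^{n_T}|g_i|^2}.
\]
The numerator is $\mathrm{Gamma}(n_T-n+1,1)$, the residual sum $\sum_{i=n_T-n+2}^{n_T}|g_i|^2$ is $\mathrm{Gamma}(n-1,1)$, and the two are independent, so the ratio is $\mathrm{Beta}(n_T-n+1,\,n-1)$, whose density is precisely (\ref{eq:pdf_omega_n}). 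Finally, since this conditional distribution does not depend on the particular values of the conditioning channels, it is also the unconditional one.

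The hard part is the first step: one must justify that, in the large-$K$ regime, the greedy selection operations (\ref{eq:selection1})--(\ref{eq:selection2}) leave the channels of the \emph{remaining} candidates statistically indistinguishable from fresh isotropic vectors, independent of $\hat{\mathbf{q}}_1,\ldots,\hat{\mathbf{q}}_{n-1}$. This is exactly the content of \emph{Lemma}~\ref{lemma:iid_property} and the ``large-user'' argument adapted from \cite{Wang_report07,Love08,L.sun_multiuser}; once it is in hand, the remainder is the routine Gamma/Beta calculation above.
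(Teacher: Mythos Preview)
Your proposal is correct and shares the same first step as the paper: invoke \emph{Lemma}~\ref{lemma:iid_property} so that, for large $K$, $\hat{\mathbf{h}}_k$ is isotropic and independent of the fixed orthonormal system $\hat{\mathbf{q}}_1,\ldots,\hat{\mathbf{q}}_{n-1}$, then rotate so that the $\hat{\mathbf{q}}_i$ are coordinate directions. The difference lies in how the Beta law is then extracted. The paper works with the joint density of $t_i=|\hat{\mathbf{h}}_k\hat{\mathbf{q}}_i^H|^2$, $i=1,\ldots,n-1$ (a Dirichlet-type density quoted from \cite{L.sun_multiuser}), applies the change of variables $w_1=\sum_{i=1}^{n-1}t_i$, $w_i=t_i$ for $i\ge 2$, and integrates out $w_2,\ldots,w_{n-1}$ explicitly to obtain $w_1\sim\mathrm{Beta}(n-1,n_T-n+1)$, hence $\omega_k(n)=1-w_1\sim\mathrm{Beta}(n_T-n+1,n-1)$. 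You instead lift $\hat{\mathbf{h}}_k$ to a Gaussian $\mathbf{g}$ and identify $\omega_k(n)$ directly as a ratio of independent $\mathrm{Gamma}(n_T-n+1,1)$ and $\mathrm{Gamma}(n-1,1)$ sums, which is the classical Beta construction. Your route is shorter and avoids the multiple integral and the citation to the joint density; the paper's route is more self-contained in that it does not appeal to the Gamma/Beta ratio fact but computes everything from the coordinate density. Either way the substance is the same once the isotropy/independence step is granted.
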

\begin{proof}
See Appendix \ref{proof:lemma_pdf_omega_n}.
\end{proof}

Note that $\gamma_{k}(n)$ involves random variables $\rho_k^2$,
$\cos^2\theta_k$ and $\sin^2\theta_k$. In the following, we recall
the joint distribution of $\rho_k^2 \cos^2\theta_k$ and $\rho_k^2
\sin^2\theta_k$ which was obtained in \cite{Yoo_limited}.
\begin{lemma}\label{lemma:S_I_distribution}
Under the quantization cell approximation of $ \tilde{\mathcal{R}}_i
$ shown in Section \ref{subsec:precoder_design}, the joint
distribution of $\big(\rho_k^2 \cos^2\theta_k, \rho_k^2
\sin^2\theta_k \big)$ is the same as that of $(X + (1-\delta)Y,
\delta Y )$, where $X$ is gamma-distributed with shape $1$ and scale
$1$ and $Y$ is gamma-distributed with shape $(M-1)$ and scale $1$.
These are denoted as $X \sim \mathrm{Gamma}(1, 1)$ and $Y \sim
\mathrm{Gamma}(M - 1, 1)$ respectively.
\end{lemma}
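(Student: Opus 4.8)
The plan is to reduce the joint law of $\big(\rho_k^2\cos^2\theta_k,\ \rho_k^2\sin^2\theta_k\big)$ to the classical beta--gamma factorisation. First I would write $\mathbf{h}_k=\rho_k\,\bar{\mathbf{h}}_k$ with $\rho_k=\|\mathbf{h}_k\|$ and $\bar{\mathbf{h}}_k=\mathbf{h}_k/\|\mathbf{h}_k\|$. Since $\mathbf{h}_k$ has i.i.d.\ circularly-symmetric complex Gaussian entries, $\rho_k^2=\|\mathbf{h}_k\|^2$ is a sum of $n_T$ i.i.d.\ unit-mean exponentials, hence $\rho_k^2\sim\mathrm{Gamma}(n_T,1)$, while $\bar{\mathbf{h}}_k$ is uniform on the complex unit sphere and independent of $\rho_k$. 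Because both the RVQ codebook and the uniform direction are rotationally invariant, and the approximate cell $\tilde{\mathcal{R}}_i$ of Section~\ref{subsec:precoder_design} is a spherical cap whose size does not depend on $i$, I may fix the selected codeword $\hat{\mathbf{h}}_k$ to be a deterministic unit vector and take $\bar{\mathbf{h}}_k$ uniform on the surrounding cap; none of $\rho_k^2$, $\cos^2\theta_k$, $\sin^2\theta_k$ is affected by this reduction.

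The second step is to pin down the conditional law of the quantisation angle under the cell approximation. For an isotropic direction and a fixed reference it is standard that $\cos^2\theta_k\sim\mathrm{Beta}(1,n_T-1)$, equivalently $\sin^2\theta_k\sim\mathrm{Beta}(n_T-1,1)$ (see, e.g.,~\cite{Jindal06}). The cell approximation restricts $\bar{\mathbf{h}}_k$ to the cap $\{\sin^2\theta_k\le\delta\}$, which merely truncates this variable to $[0,\delta]$ and rescales it, so $\sin^2\theta_k\stackrel{d}{=}\delta\,U$ with $U\sim\mathrm{Beta}(n_T-1,1)$; moreover $U$ is independent of $\rho_k$, since the cap is an event on the direction alone, and the conditional law of $\rho_k^2$ given the cap event is still $\mathrm{Gamma}(n_T,1)$. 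Here $\delta=2^{-B/(n_T-1)}$ is precisely the cap radius for which the cap occupies a fraction $1/m$ of the sphere.

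Then I would set $Y:=\rho_k^2\,U$ and $X:=\rho_k^2(1-U)$. A one-line rearrangement gives $\rho_k^2\sin^2\theta_k=\delta\rho_k^2U=\delta Y$ and $\rho_k^2\cos^2\theta_k=\rho_k^2-\rho_k^2\sin^2\theta_k=(X+Y)-\delta Y=X+(1-\delta)Y$, which is exactly the claimed representation. It then remains to identify the law of the pair $(X,Y)$: this is the standard fact that if $G\sim\mathrm{Gamma}(a+b,1)$ and $T\sim\mathrm{Beta}(a,b)$ are independent, then $GT$ and $G(1-T)$ are independent with $GT\sim\mathrm{Gamma}(a,1)$ and $G(1-T)\sim\mathrm{Gamma}(b,1)$. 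Applying this with $G=\rho_k^2\sim\mathrm{Gamma}(n_T,1)$ and $T=U\sim\mathrm{Beta}(n_T-1,1)$ (i.e.\ $a=n_T-1$, $b=1$) yields $Y\sim\mathrm{Gamma}(n_T-1,1)$ and $X\sim\mathrm{Gamma}(1,1)$, mutually independent. (As a consistency check, $X+Y=\rho_k^2\sim\mathrm{Gamma}(n_T,1)$ forces the gamma shape of $Y$ to equal $n_T-1$, the dimension of the orthogonal complement of $\hat{\mathbf{h}}_k$; this is the parameter written ``$M-1$'' in the statement.)

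The only genuinely delicate point is the second step: justifying that, under the quantization-cell approximation, restricting the uniform direction to the cap turns $\sin^2\theta_k$ into a $\delta$-rescaled $\mathrm{Beta}(n_T-1,1)$ variable while preserving its independence from $\rho_k$, and checking that the cap underlying $\tilde{\mathcal{R}}_i$ is indeed the set on which $\sin^2\theta_k\le\delta$ so that the scaling constant comes out exactly equal to $\delta$. Everything downstream --- the algebraic rearrangement and the gamma/beta factorisation --- is routine, and the conclusion then coincides with the result recalled from~\cite{Yoo_limited}.
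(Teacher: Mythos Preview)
Your argument is correct, but note that the paper does not actually supply a proof of this lemma: the sentence immediately preceding it says the result is \emph{recalled} from~\cite{Yoo_limited}, and no appendix is devoted to it. So there is no ``paper's proof'' to compare against; you have reconstructed the derivation that the cited reference carries out. Your route via the beta--gamma factorisation (if $G\sim\mathrm{Gamma}(a{+}b,1)$ and $T\sim\mathrm{Beta}(a,b)$ are independent then $GT$ and $G(1{-}T)$ are independent $\mathrm{Gamma}(a,1)$ and $\mathrm{Gamma}(b,1)$) is precisely the mechanism behind the result, and your identification of ``$M-1$'' in the statement with $n_T-1$ is right --- the $M$ there is a notational slip. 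One small caution worth recording: the paper writes the approximate cell as $\{|\bar{\mathbf h}\,\mathbf w_i|\ge 1-\delta\}$, but for the stated $\delta=2^{-B/(n_T-1)}$ to yield a cap of mass $2^{-B}$ one needs $\{|\bar{\mathbf h}\,\mathbf w_i|^2\ge 1-\delta\}$, i.e.\ $\{\sin^2\theta_k\le\delta\}$; that is the set you (correctly) use, and it is what~\cite{Yoo_limited} actually assumes.
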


Denote the c.d.f. of $\gamma_{k}(n)$ for $k \in \mathcal{U}_{n}$ as
$F_{\gamma(n)}(x)$. Equipped with \emph{Lemma}
\ref{lemma:pdf_omega_n} and \emph{Lemma}
\ref{lemma:S_I_distribution}, we can obtain $F_{\gamma(n)}(x)$ in
the following lemma.
\begin{lemma}\label{lemma:SINR_UE_selection}
The c.d.f. $F_{\gamma(n)}(x)$ for $n = 2, \ldots, n_T$ is given by
\begin{eqnarray}\label{eq:cdf_exact2}
&\hspace{-4cm} F_{\gamma(n)}(x) = 1- a_n ~ x^{n_T-n+1}\nonumber\\
&  \times  V(n-1;-n_T+2;-n_T+1;\frac{1}{\phi}; x),~~~x \geq
\frac{1}{\delta} -1,~~
\end{eqnarray}
where $a_n = \frac{2^B}{\beta(n_T-n+1,n-1)}$, 
\begin{align}
V(m_1;m_2;m_3;\mu; x) & = \int_{x}^{\infty} e^{-\mu t}(t-x)^{m_1-1}
\left( t+1\right)^{m_2-1}\nonumber \\ &\times t^{m_3-1}{\rm d}t
\nonumber
\end{align}
with $\mathrm{Re}~[m_1]>0, \mathrm{Re} ~[\mu x ]>0$.
\end{lemma}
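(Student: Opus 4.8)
The plan is to compute $F_{\gamma(n)}(x)=\Pr[\gamma_k(n)\le x]$ for $k\in\mathcal{U}_n$ directly from the representation \eqref{eq:SINR_limited_sel}, integrating out in turn the three elementary random variables that enter it — the exponential variable hidden inside $\rho_k^2\cos^2\theta_k$, the gamma variable $Y$, and the beta variable $\omega_k(n)$ — and then recognizing the remaining single integral as $V(\cdot)$ through one change of variable.

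First I would assemble the joint law of the quantities in \eqref{eq:SINR_limited_sel}. By \emph{Lemma}~\ref{lemma:pdf_omega_n}, $\omega_k(n)\sim\mathrm{Beta}(n_T-n+1,n-1)$; by \emph{Lemma}~\ref{lemma:S_I_distribution}, $(\rho_k^2\cos^2\theta_k,\rho_k^2\sin^2\theta_k)\stackrel{d}{=}(X+(1-\delta)Y,\delta Y)$ with $X\sim\mathrm{Gamma}(1,1)$ (unit-mean exponential) and $Y\sim\mathrm{Gamma}(n_T-1,1)$, mutually independent. The one point that deserves an explicit word is the independence of $\omega_k(n)$ from the pair $(\rho_k^2\cos^2\theta_k,\rho_k^2\sin^2\theta_k)$: by \emph{Lemma}~\ref{lemma:iid_property} the orthonormal vectors $\hat{\mathbf q}_1,\dots,\hat{\mathbf q}_{n-1}$ are (asymptotically) independent of $\hat{\mathbf h}_k$, so $\omega_k(n)=\|\boldsymbol{\xi}_k\|^2$ is a function only of the orientation of the isotropic unit vector $\hat{\mathbf h}_k$ relative to a fixed $(n-1)$-dimensional subspace, whereas $(\rho_k,\cos^2\theta_k)$ are governed by the magnitude of $\mathbf h_k$ and the rotationally invariant RVQ quantization error; hence the two are independent. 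Writing $\omega=\omega_k(n)$ we then have $\gamma_k(n)\stackrel{d}{=}\dfrac{\phi\,\omega\,(X+(1-\delta)Y)}{\phi\delta Y+1}$.

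Next I would condition on $(\omega,Y)$ and integrate out $X$. The event $\gamma_k(n)\le x$ is equivalent to $X\le a:=\dfrac{x(\phi\delta Y+1)}{\phi\omega}-(1-\delta)Y$, and a short check shows $a\ge 0$ almost surely whenever $x\ge\frac{1}{\delta}-1$: since $\omega\le 1$ and $x\delta\ge 1-\delta$, the coefficient $\frac{x\delta}{\omega}-(1-\delta)$ of $Y$ is nonnegative, and the additive term $\frac{x}{\phi\omega}$ is positive — which is exactly why the claimed formula is restricted to $x\ge\frac{1}{\delta}-1$ (consistent with the footnote). Hence $\Pr[X\le a\mid\omega,Y]=1-e^{-a}$, giving
\[
F_{\gamma(n)}(x)=1-\mathbb{E}_{\omega,Y}\!\left[\exp\!\Big(-\tfrac{x}{\phi\omega}\Big)\exp\!\Big(-Y\big(\tfrac{x\delta}{\omega}-(1-\delta)\big)\Big)\right],\qquad x\ge\tfrac{1}{\delta}-1 .
\]
Applying the Laplace transform of $Y\sim\mathrm{Gamma}(n_T-1,1)$, namely $\mathbb{E}_Y[e^{-cY}]=(1+c)^{-(n_T-1)}$ with here $1+c=\frac{\delta(x+\omega)}{\omega}$, and using $\delta^{-(n_T-1)}=2^{B}$, this collapses to
\[
F_{\gamma(n)}(x)=1-2^{B}\,\mathbb{E}_{\omega}\!\left[e^{-x/(\phi\omega)}\,\frac{\omega^{\,n_T-1}}{(x+\omega)^{\,n_T-1}}\right],
\]
and putting $\omega\equiv 1$ here recovers \eqref{eq:cdf_exact1}, a convenient sanity check.

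Finally I would substitute the $\mathrm{Beta}(n_T-n+1,n-1)$ density $f_{\omega(n)}$ of \emph{Lemma}~\ref{lemma:pdf_omega_n} and make the change of variable $t=x/\omega$. With $\omega=x/t$ and $d\omega=-x\,t^{-2}\,dt$, the range $\omega\in(0,1)$ maps to $t\in(x,\infty)$, $e^{-x/(\phi\omega)}=e^{-t/\phi}$, and a routine bookkeeping of the powers of $x$ and $t$ turns the remaining integral into exactly $x^{\,n_T-n+1}\int_x^{\infty}e^{-t/\phi}(t-x)^{n-2}(t+1)^{-n_T+1}t^{-n_T}\,dt=x^{\,n_T-n+1}\,V(n-1;-n_T+2;-n_T+1;\tfrac{1}{\phi};x)$, whence $F_{\gamma(n)}(x)=1-a_n\,x^{\,n_T-n+1}\,V(n-1;-n_T+2;-n_T+1;\tfrac{1}{\phi};x)$ with $a_n=2^{B}/\beta(n_T-n+1,n-1)$, as claimed. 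The only genuinely delicate step is the independence/conditioning argument in the first paragraph (resting on \emph{Lemma}~\ref{lemma:iid_property}); once the joint law is in hand the rest is exponential/gamma/beta integration together with the single substitution $t=x/\omega$.
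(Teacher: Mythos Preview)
Your proposal is correct and follows essentially the same route as the paper's proof: represent $\gamma_k(n)$ via $(X,Y,\omega)$, integrate out the exponential $X$ after checking the threshold is nonnegative for $x\ge\frac{1}{\delta}-1$, collapse the $Y$-integral using the gamma Laplace transform (yielding the factor $2^B\omega^{n_T-1}/(x+\omega)^{n_T-1}$), insert the beta density for $\omega$, and finish with the substitution $t=x/\omega$. Your treatment is in fact slightly more explicit than the paper's on the independence of $\omega_k(n)$ from $(\rho_k^2\cos^2\theta_k,\rho_k^2\sin^2\theta_k)$ and adds the $\omega\equiv 1$ sanity check against \eqref{eq:cdf_exact1}, but the argument is otherwise identical.
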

\begin{proof}
See Appendix \ref{proof:lemma_SINR_UE_selection}.
\end{proof}
A closed-form solution for the common distribution function
$F_{\gamma(n)}(x)$ appears intractable. As we will see, closed-form
\emph{upper and lower bounds} of $F_{\gamma(n)}(x)$ can be obtained,
and these are sufficient to analyze the performance in the
large-user regime.
\begin{lemma}\label{lemma:cdf_upper_lower}
The c.d.f. of $\gamma_k(n)$ for $k \in \mathcal{U}_{n} $, $n \in \{
2, \ldots, n_T\}$ and $x \geq \frac{1}{\delta} -1$, satisfies
$F_{\tilde{\gamma}(n)}(x) \leq F_{\gamma(n)}(x) \leq
F_{\bar{\gamma}(n)}(x)$ with $F_{\tilde{\gamma}(n)}(x)$ and
$F_{\bar{\gamma}(n)}(x)$ given by
\begin{eqnarray}\label{eq:cdf_lower}
&F_{\tilde{\gamma}(n)}(x) = 1 - b_{1,n} ~ x^{ - \frac{n-1}{2}}\exp
\left(- \frac{x}{2\phi} \right) \nonumber \\
&\hspace{1cm}\times  W_{\frac{-2n_T-n+3}{2},
\frac{2n_T-n}{2}}\left(\frac{ x}{\phi} \right),
\end{eqnarray}
and
\begin{align}\label{eq:cdf_upper}
& F_{\bar{\gamma}(n)}(x) = 1 - b_{2,n}~x^{n_T-n+1}\left( x+
\frac{n_T-1}{2n_T-1}\right)^{- \frac{2n_T-n+1}{2}} \nonumber\\
& \hspace{0.5cm} \times \exp \left(- \frac{x}{2\phi} \right)
W_{\frac{-2n_T-n+3}{2}, \frac{2n_T-n}{2}}\left(\frac{ x+
\frac{n_T-1}{2n_T-1}}{\phi} \right)
\end{align}
respectively, where
\begin{eqnarray}
b_{1,n} = \frac{2^B (n-2)!}{\beta(n_T-n+1,n-1)}\left(\frac{1}{\phi}
\right)^{\frac{2n_T-n-1}{2}},\nonumber\\ b_{2,n} =\frac{2^B
(n-2)!}{\beta(n_T-n+1,n-1)}\left(\frac{1}{\phi}
\right)^{\frac{2n_T-n-1}{2}} e^{\frac{\frac{n_T-1}{2n_T-1}}{2\phi}}
,\nonumber
\end{eqnarray}
where $W_{\mu, \nu} (x)$ is Whittaker function of the second kind
\cite{Gradshteyn2000}. The equalities on both sides are approached
as $x \to \infty$.
\end{lemma}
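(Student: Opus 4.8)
The plan is to start from the exact expression of \emph{Lemma}~\ref{lemma:SINR_UE_selection}, $F_{\gamma(n)}(x) = 1 - a_n x^{n_T-n+1}\,V$ with $V = \int_{x}^{\infty} e^{-t/\phi}(t-x)^{n-2}(t+1)^{-(n_T-1)}t^{-n_T}\,\mathrm{d}t$, and to sandwich the integrand between two single power laws. For $t\ge x>0$ I will use two elementary pointwise inequalities: (i) since $t+1\ge t$, $(t+1)^{-(n_T-1)}t^{-n_T}\le t^{-(2n_T-1)}$; and (ii) by concavity of $\log(\cdot)$ applied with weights $\tfrac{n_T-1}{2n_T-1}$ and $\tfrac{n_T}{2n_T-1}$ to the points $t+1$ and $t$, $(n_T-1)\log(t+1)+n_T\log t\le(2n_T-1)\log\bigl(t+\tfrac{n_T-1}{2n_T-1}\bigr)$, hence $(t+1)^{-(n_T-1)}t^{-n_T}\ge\bigl(t+\tfrac{n_T-1}{2n_T-1}\bigr)^{-(2n_T-1)}$. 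Both bounds then have the common shape $I_c:=\int_x^\infty e^{-t/\phi}(t-x)^{n-2}(t+c)^{-(2n_T-1)}\,\mathrm{d}t$ with $c=0$ and $c=\tfrac{n_T-1}{2n_T-1}$ respectively, and since $a_n x^{n_T-n+1}>0$ the bound $V\le I_0$ yields the \emph{lower} bound $F_{\tilde{\gamma}(n)}(x)$ on $F_{\gamma(n)}(x)$, while $V\ge I_{(n_T-1)/(2n_T-1)}$ yields the \emph{upper} bound $F_{\bar{\gamma}(n)}(x)$.

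The next step is to evaluate $I_c$ in closed form. Substituting $t=x+u$ gives $I_c=e^{-x/\phi}\int_0^\infty e^{-u/\phi}u^{n-2}(u+x+c)^{-(2n_T-1)}\,\mathrm{d}u$, and the standard integral representation of the Tricomi confluent hypergeometric function, $\int_0^\infty e^{-pu}u^{\nu-1}(u+a)^{-\sigma}\,\mathrm{d}u = a^{\nu-\sigma}\Gamma(\nu)\,U(\nu,\nu-\sigma+1,ap)$ with $\nu=n-1$, $\sigma=2n_T-1$, $a=x+c$, $p=1/\phi$, followed by $U(\alpha,\beta,z)=z^{-\beta/2}e^{z/2}W_{\beta/2-\alpha,(\beta-1)/2}(z)$ and the evenness of $W_{\mu,\nu}$ in $\nu$, produces an expression in $W_{(-2n_T-n+3)/2,\,(2n_T-n)/2}\bigl(\tfrac{x+c}{\phi}\bigr)$. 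Multiplying by $a_n x^{n_T-n+1}e^{-x/\phi}$ and collecting the powers of $x+c$, of $\phi$, and the factor $\Gamma(n-1)=(n-2)!$, I expect the $c=0$ case to reproduce $F_{\tilde{\gamma}(n)}(x)$ with constant $b_{1,n}$, and the $c=\tfrac{n_T-1}{2n_T-1}$ case to reproduce $F_{\bar{\gamma}(n)}(x)$; in the latter the factor $e^{ap/2}=e^{(x+c)/(2\phi)}$ coming from the $U\!\to\!W$ identity both leaves the surviving $e^{-x/(2\phi)}$ and contributes the extra constant $e^{\frac{n_T-1}{2\phi(2n_T-1)}}$ that distinguishes $b_{2,n}$ from $b_{1,n}$.

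For the claimed tightness, note that $1-F_{\gamma(n)}(x)$, $1-F_{\tilde{\gamma}(n)}(x)$ and $1-F_{\bar{\gamma}(n)}(x)$ equal $a_n x^{n_T-n+1}$ times the integral of $e^{-t/\phi}(t-x)^{n-2}$ against, respectively, $(t+1)^{-(n_T-1)}t^{-n_T}$, $t^{-(2n_T-1)}$ and $\bigl(t+\tfrac{n_T-1}{2n_T-1}\bigr)^{-(2n_T-1)}$. On $[x,\infty)$ the ratio of the first of these three to each of the other two is $1+O(1/x)$ uniformly in $t$ (the quantitative form of (i) and (ii)), so $1-F_{\bar{\gamma}(n)}(x)\sim 1-F_{\gamma(n)}(x)\sim 1-F_{\tilde{\gamma}(n)}(x)$ as $x\to\infty$, which is the assertion that the two inequalities become equalities in the limit.

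I expect the only real work to be bookkeeping: keeping track of the two half-integer Whittaker indices and reconciling all constants after combining the shift $t\mapsto t-x$, the $U\!\to\!W$ identity, and the prefactor $a_n x^{n_T-n+1}$; there is no analytic subtlety beyond the two pointwise inequalities. It is worth recording that the argument is valid for $n\ge 2$, since this is exactly what makes $u^{n-2}$ integrable at the origin (equivalently $\mathrm{Re}[m_1]=n-1>0$ in $V$) and keeps $\Gamma(n-1)$ finite, matching the stated range $n\in\{2,\ldots,n_T\}$.
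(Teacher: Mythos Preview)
Your proposal is correct and follows essentially the same route as the paper: the paper uses the trivial bound $t+1>t$ for one side and the weighted AM--GM (``geometric-mean'') inequality $(t+1)^{n_T-1}t^{n_T}<\bigl(t+\tfrac{n_T-1}{2n_T-1}\bigr)^{2n_T-1}$ for the other, then evaluates the resulting integrals directly via the Whittaker integral \cite[3.383.4]{Gradshteyn2000} rather than passing through the Tricomi $U$ function. Your tightness argument via the uniform $1+O(1/x)$ ratio is in fact a bit cleaner than the paper's, which instead compares the two asymptotic expansions of $F_{\tilde{\gamma}(n)}$ and $F_{\bar{\gamma}(n)}$ obtained from the large-argument expansion of $W_{\lambda,\mu}$.
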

\begin{proof}
See Appendix \ref{proof:lemma_cdf_upper_lower}.
\end{proof}

According to (\ref{eq:selection2}), $\gamma_{\mathcal{S}(n)}$ is the
maximum of a collection of i.i.d.\ random variables chosen from
$\mathcal{U}_n$, with common c.d.f.\ $F_{\gamma(n)}(x)$. Moreover,
since the analysis is carried out in the large-user regime,
according to the extreme value theory of order statistics (see e.g.\
\cite{extreme_oreder_statistics87} \cite[Appendix
I]{Mohammad08}\cite{oreder_statistics03}), the asymptotic
distribution of the largest order statistic
$\gamma_{\mathcal{S}(n)}$ depends on the \emph{tail} behavior (large
$x$) of $F_{\gamma(n)}(x)$. For $n \geq 2$, the following
closed-form asymptotic ($x \rightarrow \infty$) expansions for the
c.d.f.\ lower and upper bounds in (\ref{eq:cdf_lower}) and
(\ref{eq:cdf_upper}) are derived in Appendix
\ref{app:lemma:tail_gamma_n}:
\begin{align}\label{eq:tail_lower}
F_{\tilde{\gamma}(n)}(x) & = 1 - c_n~\phi^{n-1}
x^{-n_T-n+2}\exp\left(- \frac{x}{\phi}\right) \nonumber\\ & +
O\left( x^{-n_T-n+1} \exp\left(- \frac{x}{\phi}\right) \right),
\end{align}
and
\begin{align}\label{eq:tail_upper}
F_{\bar{\gamma}(n)}(x) & = 1- c_n ~\phi^{n-1}\frac{
x^{n_T-n+1}}{\left(x+ \frac{n_T-1}{2n_T-1}\right)^{2n_T -1}}
\exp\left(- \frac{x}{\phi}\right)\nonumber\\ & + O\left(
x^{-n_T-n+1} \exp\left(- \frac{x}{\phi}\right) \right)
\end{align}
where
\begin{equation}\label{eq:c_n}
c_n = \frac{2^B (n-2)!}{\beta(n_T-n+1,n-1)}.
\end{equation}
We can also examine the tightness of these bounds as follows. First,
noticing that both $F_{\bar{\gamma}(n)}(x)$ and
$F_{\tilde{\gamma}(n)}(x)$ are continuous monotonic increasing
functions and $\lim_{x \rightarrow + \infty} F_{\gamma(n)}(x) =
\lim_{x \rightarrow + \infty} F_{\bar{\gamma}(n)}(x) = \lim_{x
\rightarrow + \infty} F_{\tilde{\gamma}(n)}(x) =1 $, for an
arbitrary given constant $\varepsilon > 0$, we can always find a
positive number $X(\varepsilon)$ such that, whenever $x >
X(\varepsilon)$,
\begin{align}
F_{\gamma(n)}(x) - F_{\tilde{\gamma}(n)}(x) < \varepsilon/2
\nonumber
\end{align}
and
\begin{align}
F_{\bar{\gamma}(n)}(x) - F_{\gamma(n)}(x)  < \varepsilon/2.
\nonumber
\end{align}
Using (\ref{eq:tail_lower}) and (\ref{eq:tail_upper}), as $x
\rightarrow + \infty$ we have
\begin{align}
&F_{\bar{\gamma}(n)}(x) - F_{\tilde{\gamma}(n)}(x)\nonumber\\
& = c_n~\phi^{n-1} \exp\left(- \frac{x}{\phi}\right) \bigg(
x^{-n_T-n+2} \nonumber \\  & \hspace{1cm} - \frac{
x^{n_T-n+1}}{\left(x+ \frac{n_T-1}{2n_T-1}\right)^{2n_T -1}} +
o\left(  x^{-n_T-n+1}
\right)\bigg)\nonumber\\
& = c_n \left(n_T-1 \right) ~\phi^{n-1} \exp\left(-
\frac{x}{\phi}\right)\nonumber \\ & \hspace{1cm}  \times \left(
\frac{x^{n_T -n}}{\left(x+ \frac{n_T-1}{2n_T-1}\right)^{2n_T -1}} +
o\left( x^{-n_T-n+1} \right)  \right). \nonumber
\end{align}
Thus, as $x \rightarrow  + \infty$, both $F_{\bar{\gamma}(n)}(x) -
F_{\gamma(n)}(x)$ and $F_{\bar{\gamma}(n)}(x) - F_{\gamma(n)}(x)$
decrease at the speed, not slower than $O \left(\exp\left(-
\frac{x}{\phi}\right)  x^{-n_T-n+1}  \right)$.

Based on the above results, we can establish upper and lower bounds
on the asymptotic distribution of $\gamma_{\mathcal{S}(n)}$ for
large $K$ and $n=2,\ldots,n_T$. To this end, define
$\tilde{\gamma}_{\mathcal{S}(n)} = \max_{k \in \mathcal{U}_n}
\tilde{\gamma}_{k}(n)$ and $\bar{\gamma}_{\mathcal{S}(n)} = \max_{k
\in \mathcal{U}_n} \bar{\gamma}_{k}(n)$, with c.d.f.s
$F_{\tilde{\gamma}_{\mathcal{S}(n)}}(x)$ and
$F_{\bar{\gamma}_{\mathcal{S}(n)}(x)}$ respectively. It is clear
that $F_{{\tilde{\gamma}}_{\mathcal{S}(n)}}(x) \leq
F_{{\gamma}_{\mathcal{S}(n)}}(x) \leq
F_{{\bar{\gamma}}_{\mathcal{S}(n)}}(x)$ ($n=2,\ldots,n_T$). Then, we
have the following lemma:
\begin{lemma}\label{lemma:max_gamma_bound}
The random variables $\tilde{\gamma}_{\mathcal{S}(n)}$ and
$\bar{\gamma}_{\mathcal{S}(n)}$, $n \in \{ 2,\ldots, n_T \}$,
satisfy
\begin{eqnarray}\label{eq:extre_gamma_til}
&\hspace{-0.5cm}\text{Pr}\{ \chi_n - \phi \log\log \sqrt{K} \leq
\tilde{\gamma}_{\mathcal{S}(n)}
\leq  \chi_n + \phi \log\log \sqrt{K}\} \nonumber\\ & \geq 1 - O\bigg(\frac{1}{\log K}\bigg), \\
\label{eq:extre_gamma_bar} & \hspace{-0.5cm} \text{Pr}\{\chi_n -
\phi\log\log \sqrt{K} \leq \bar{\gamma}_{\mathcal{S}(n)} \leq \chi_n
+ \phi\log\log \sqrt{K}\} \nonumber\\ &   \geq 1 - O \left
(\frac{1}{\log K}\right),
\end{eqnarray}
where\footnote{Here $\log(\cdot)$ represents the natural logarithm.}
\begin{align}\label{eq:chi_n}
\chi_n = \phi \log \left( \frac{c_n K}{\phi^{n_T-1}}\right) - \phi
(n_T+n-2) \log\log \left(\frac{c_n K}{\phi^{n_T -1}} \right)
\end{align}
with $c_n$ given by (\ref{eq:c_n}).
\end{lemma}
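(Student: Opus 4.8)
The plan is to treat $\tilde\gamma_{\mathcal S(n)}$ and $\bar\gamma_{\mathcal S(n)}$ as maxima of $|\mathcal U_n|$ i.i.d.\ random variables, apply the standard Gumbel-type extreme-value argument driven solely by the exponential tail of the c.d.f.s, and then verify that the concentration window $[\chi_n-\phi\log\log\sqrt K,\ \chi_n+\phi\log\log\sqrt K]$ captures this maximum with the claimed probability. By Lemma~\ref{lemma:iid_property}, conditioned on the previously selected channels the $|\mathcal U_n|$ random variables $\tilde\gamma_k(n)$ (respectively $\bar\gamma_k(n)$) are i.i.d., and $|\mathcal U_n| = K - (n-1) \to \infty$, so we may write, e.g., $\text{Pr}\{\tilde\gamma_{\mathcal S(n)} \le x\} = F_{\tilde\gamma(n)}(x)^{\,K-n+1}$. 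Since the target window lies in the tail, we use the asymptotic expansion \eqref{eq:tail_lower}, namely $1 - F_{\tilde\gamma(n)}(x) = c_n\,\phi^{\,n-1}\,x^{-n_T-n+2}e^{-x/\phi}\,(1+o(1))$ as $x\to\infty$. The same exponential $e^{-x/\phi}$ governs the upper bound \eqref{eq:tail_upper}, whose polynomial prefactor is $x^{n_T-n+1}/(x+\tfrac{n_T-1}{2n_T-1})^{2n_T-1} = x^{-n_T-n+2}(1+o(1))$; hence both bounds have \emph{identical} leading tail behavior, which is why the same $\chi_n$ works for both.

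Next I would evaluate the tail mass at the two endpoints. Substituting $x = \chi_n \pm \phi\log\log\sqrt K$ into $1-F_{\tilde\gamma(n)}(x)$: the exponential factor gives $e^{-\chi_n/\phi}\cdot(\log\sqrt K)^{\mp 1}$, and by the definition \eqref{eq:chi_n} of $\chi_n$ one has $e^{-\chi_n/\phi} = \frac{\phi^{n_T-1}}{c_n K}\big(\log(c_nK/\phi^{n_T-1})\big)^{n_T+n-2}$; meanwhile the polynomial prefactor $x^{-n_T-n+2}$ evaluated at $x=\chi_n(1+o(1)) = \phi\log(c_nK/\phi^{n_T-1})(1+o(1))$ contributes exactly $\phi^{-n_T-n+2}\big(\log(c_nK/\phi^{n_T-1})\big)^{-n_T-n+2}$ up to $1+o(1)$. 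Multiplying these together with the constant $c_n\phi^{n-1}$, the powers of the log-term cancel precisely (that is the point of the $\phi(n_T+n-2)\log\log(\cdot)$ correction in $\chi_n$), leaving
\begin{align}
1-F_{\tilde\gamma(n)}\!\big(\chi_n \pm \phi\log\log\sqrt K\big) = \frac{1}{K}\,(\log\sqrt K)^{\mp 1}\,(1+o(1)).\nonumber
\end{align}
Therefore $(K-n+1)\big(1-F_{\tilde\gamma(n)}(\cdot)\big)$ tends to $\infty$ at the lower endpoint and to $0$ at the upper endpoint; more precisely it is $\Theta\big(\tfrac{1}{\log\sqrt K}\big)$ at the upper endpoint (this is where the $O(1/\log K)$ in the bound comes from).

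I would then convert these into probability statements using $1-u \le e^{-u}$ and, for the lower-tail bound, the elementary inequality $(1-u)^N \le e^{-Nu}$ together with $1-e^{-t} \le t$, so that
\begin{align}
\text{Pr}\{\tilde\gamma_{\mathcal S(n)} > \chi_n + \phi\log\log\sqrt K\} &= 1 - F_{\tilde\gamma(n)}(\cdot)^{K-n+1} \le (K-n+1)\big(1-F_{\tilde\gamma(n)}(\cdot)\big) = O\!\Big(\tfrac{1}{\log K}\Big),\nonumber\\
\text{Pr}\{\tilde\gamma_{\mathcal S(n)} < \chi_n - \phi\log\log\sqrt K\} &= F_{\tilde\gamma(n)}(\cdot)^{K-n+1} \le \exp\!\big(-(K-n+1)(1-F_{\tilde\gamma(n)}(\cdot))\big) = \exp\!\big(-\Theta(\log\sqrt K)\big),\nonumber
\end{align}
the latter being $o(1/\log K)$ since $\log\sqrt K/\log\log K \to \infty$. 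A union bound over the two tails yields \eqref{eq:extre_gamma_til}. The argument for $\bar\gamma_{\mathcal S(n)}$ is verbatim the same, using \eqref{eq:tail_upper} and the observation above that its leading tail coefficient agrees with that of the lower bound, so the identical $\chi_n$ and identical endpoint evaluations apply, giving \eqref{eq:extre_gamma_bar}.

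The main obstacle is bookkeeping rather than any genuine difficulty: one must be careful that the $o(1)$ corrections in the tail expansions \eqref{eq:tail_lower}–\eqref{eq:tail_upper} do not spoil the delicate cancellation of the $\log\log$ powers when $x$ itself is only known up to a $(1+o(1))$ factor — i.e.\ one needs $\log(\chi_n) = \log\log(c_nK/\phi^{n_T-1}) + \log\phi + o(1)$ and must track that the residual $o(1)$ multiplies a term that is already $\Theta(1/K)$, hence harmless. A secondary point worth stating explicitly is that all of this is conditional on $\mathbf h_{\mathcal S(1)},\dots,\mathbf h_{\mathcal S(n-1)}$, but since the resulting bound is a deterministic constant independent of the conditioning, it holds unconditionally as well.
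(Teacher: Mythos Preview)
Your proposal is correct and follows essentially the same approach as the paper: both identify the Gumbel-type tail via the asymptotic expansions \eqref{eq:tail_lower}--\eqref{eq:tail_upper}, choose the centering $b_{K,n}=\chi_n$ and scale $a_{K,n}=\phi$ so that $K(1-F(\chi_n+\phi x))\to e^{-x}$, and then read off the concentration window. The only difference is presentational: the paper outsources the final conversion to the probability bound by citing \cite[Appendix~I, Lemma~7]{Mohammad08}, whereas you carry it out explicitly with the elementary inequalities $1-(1-u)^N\le Nu$ and $(1-u)^N\le e^{-Nu}$, which is a bit more self-contained.
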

\begin{proof}
See Appendix \ref{proof:lemma_max_gamma_bound}.
\end{proof}
The following lemma follows from the above results.
\begin{lemma}\label{lemma:sinr_up_low}
For $\gamma_{\mathcal{S}(n)}$, $n \in \{1,\ldots,n_T \}$, we have
\begin{eqnarray}\label{eq:sinr_up_low}
&\hspace{-0.5cm}\text{Pr}\{ \chi_n - \phi \log\log \sqrt{K} \leq
\gamma_{\mathcal{S}(n)} \leq \chi_n + \phi \log\log
\sqrt{K}\}\nonumber \\& \geq 1 - O\bigg(\frac{1}{\log K}\bigg),
\end{eqnarray}
where
\begin{equation}\label{eq:chi_1}
\chi_1 = \phi \log \left( \frac{2^B K}{\phi^{n_T-1}}\right) -
\phi(n_T-1) \log\log \left( \frac{2^B K}{\phi^{n_T-1}}\right)
\end{equation}
and $\chi_n$ for $n=2,\ldots,n_T$ is given by (\ref{eq:chi_n}).
\end{lemma}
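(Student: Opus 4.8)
The plan is to split into the case $n=1$, handled by a direct extreme-value computation on the exact c.d.f.\ $F_{\gamma(1)}$ in (\ref{eq:cdf_exact1}), and the cases $n=2,\ldots,n_T$, handled by feeding the stochastic sandwich $F_{\tilde{\gamma}_{\mathcal{S}(n)}}(x)\le F_{\gamma_{\mathcal{S}(n)}}(x)\le F_{\bar{\gamma}_{\mathcal{S}(n)}}(x)$ into the concentration bounds of \emph{Lemma} \ref{lemma:max_gamma_bound}.

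For $n\ge 2$, the right-hand inequality $F_{\gamma_{\mathcal{S}(n)}}(x)\le F_{\bar{\gamma}_{\mathcal{S}(n)}}(x)$ gives $\mathrm{Pr}\{\gamma_{\mathcal{S}(n)}\ge \chi_n-\phi\log\log\sqrt{K}\}\ge \mathrm{Pr}\{\bar{\gamma}_{\mathcal{S}(n)}\ge \chi_n-\phi\log\log\sqrt{K}\}$, which by (\ref{eq:extre_gamma_bar}) is at least $1-O(1/\log K)$; the left-hand inequality $F_{\tilde{\gamma}_{\mathcal{S}(n)}}(x)\le F_{\gamma_{\mathcal{S}(n)}}(x)$ gives $\mathrm{Pr}\{\gamma_{\mathcal{S}(n)}\le \chi_n+\phi\log\log\sqrt{K}\}\ge \mathrm{Pr}\{\tilde{\gamma}_{\mathcal{S}(n)}\le \chi_n+\phi\log\log\sqrt{K}\}\ge 1-O(1/\log K)$ by (\ref{eq:extre_gamma_til}). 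Intersecting the two events (a union bound on their complements) yields (\ref{eq:sinr_up_low}) with $\chi_n$ exactly as in (\ref{eq:chi_n}), since $|\mathcal{U}_n|=K-n+1$ is $\Theta(K)$ and the $\sqrt{K}$ already appears in \emph{Lemma} \ref{lemma:max_gamma_bound}.

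For $n=1$, the $\gamma_k(1)$ are i.i.d.\ over $k\in\mathcal{U}_1$, so $\mathrm{Pr}\{\gamma_{\mathcal{S}(1)}\le x\}=F_{\gamma(1)}(x)^K=\bigl(1-u(x)\bigr)^K$ with $u(x)=2^B e^{-x/\phi}(1+x)^{-(n_T-1)}$ for $x\ge 1/\delta-1$. First I would substitute $x=\chi_1\pm\phi\log\log\sqrt{K}$, using $e^{-\chi_1/\phi}=\frac{\phi^{n_T-1}}{2^B K}\bigl(\log\frac{2^B K}{\phi^{n_T-1}}\bigr)^{n_T-1}$ and the fact that $1+\chi_1\pm\phi\log\log\sqrt{K}=(1+o(1))\,\phi\log\frac{2^B K}{\phi^{n_T-1}}$ (because $\chi_1/\phi\sim\log K$ dominates the $\log\log$ correction), to obtain $K\,u(\chi_1+\phi\log\log\sqrt{K})=O(1/\log K)$ and $K\,u(\chi_1-\phi\log\log\sqrt{K})=\Theta(\log\sqrt{K})$. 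The elementary bounds $1-Ku\le(1-u)^K\le e^{-Ku}$ then give $\mathrm{Pr}\{\gamma_{\mathcal{S}(1)}\le \chi_1+\phi\log\log\sqrt{K}\}\ge 1-O(1/\log K)$ and $\mathrm{Pr}\{\gamma_{\mathcal{S}(1)}\le \chi_1-\phi\log\log\sqrt{K}\}\le e^{-\log\sqrt{K}}=O(1/\sqrt{K})$; a union bound over the two tails yields (\ref{eq:sinr_up_low}) for $n=1$ with $\chi_1$ as in (\ref{eq:chi_1}). This is the same machinery as in Appendix \ref{proof:lemma_max_gamma_bound}, merely applied to an exact c.d.f.\ instead of to the two bounding c.d.f.s.

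The only delicate point is the $n=1$ estimate: one must verify that perturbing $x$ by $\pm\phi\log\log\sqrt{K}$ around $\chi_1$ changes the polynomial prefactor $(1+x)^{-(n_T-1)}$ only by a $(1+o(1))$ factor, and that the crude inequalities $1-Ku\le(1-u)^K\le e^{-Ku}$ are tight enough on the scales $Ku=\Theta(1/\log K)$ and $Ku=\Theta(\log\sqrt{K})$ — both are immediate because $\chi_1/\phi\sim\log K\gg\log\log K$. Everything else is bookkeeping, as \emph{Lemma} \ref{lemma:cdf_upper_lower} and \emph{Lemma} \ref{lemma:max_gamma_bound} have already carried out the substantive tail and extreme-value analysis for the bounding distributions.
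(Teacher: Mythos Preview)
Your proposal is correct and follows essentially the same route as the paper. For $n\ge 2$ your sandwich argument is identical to the paper's; for $n=1$ the paper simply cites \cite[\emph{Theorem} 1]{Yoo_limited} for the concentration result (\ref{eq:sinr1_up_low}), whereas you redo that extreme-value computation explicitly --- but the underlying argument (checking that $F_{\gamma(1)}$ lies in the Gumbel domain of attraction with the stated $a_K,b_K$) is the same.
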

\begin{proof}
See Appendix \ref{proof:lemma_sinr_up_low}.
\end{proof}

We can now prove the following theorem (see Appendix
\ref{proof:theorem_sum_rate_THP_feedback}), which presents a key
contribution:
\begin{theorem}\label{theorem:sum_rate_THP_feedback}
For a fixed number of transmit antennas $n_T$ and fixed transmit
power $P$, the sum rate $R_{\text{G-THP-Q}}$ of the proposed
quantized CSI-based TH precoding scheme satisfies
\begin{eqnarray}\label{eq:converge_1}
\lim_{K \rightarrow \infty}
\frac{R_{\text{G-THP-Q}}}{n_T\log_2[\varrho\log K]} = 1
\end{eqnarray}
with probability 1, where $\varrho = \frac{P}{n_T}$. For $M$ fixed,
\begin{eqnarray}\label{eq:converge_2}
& \lim_{K \rightarrow  \infty } \mathbb{E} \{R_{\text{BC}}\} -
\mathbb{E} \{ R_{\text{G-THP-Q}}\}  \nonumber \\ & \hspace{1cm}\leq
n_T \log_2\left(1+ \frac{1}{M-1}\right).
\end{eqnarray}
Moreover, as $M, K\rightarrow \infty$ we have
\begin{eqnarray}\label{eq:converge_4}
& \hspace{-4cm}\mathbb{E} \{ R_{\text{BC}}\} - \mathbb{E} \{
R_{\text{G-THP-Q}} \} \nonumber\\ & \leq \min\left\{O \left
(\frac{\log \log K}{\log K} \right ), O\left(\frac{1}{M-1}
\right)\right\},
\end{eqnarray}
and
\begin{eqnarray}\label{eq:converge_3}
\lim_{K \rightarrow  \infty,~  M \rightarrow  \infty } \mathbb{E} \{
R_{\text{BC}}\} - \mathbb{E} \{ R_{\text{G-THP-Q}} \} =0,
\end{eqnarray}
where $R_{\text{BC}}$ denotes the sum rate of the MIMO broadcast
channel, achieved with DPC.

\end{theorem}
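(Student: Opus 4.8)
The plan is to assemble the theorem from \emph{Lemma}~\ref{lemma:sinr_up_low}, the sum-rate identity (\ref{eq:throughtput}) (which for $L=n_T$ reads $R_{\text{G-THP-Q}}=\sum_{n=1}^{n_T}\log_2(1+\gamma_{\mathcal{S}(n)})$), and the known asymptotic growth $\mathbb{E}\{R_{\text{BC}}\}=n_T\log_2(\varrho\log K)+o(1)$ of the MIMO broadcast sum capacity; no tail estimate beyond those already established is required. First I would record the elementary expansions that drive everything: from (\ref{eq:chi_1}) and (\ref{eq:chi_n}),
\begin{align}
\chi_n=\phi\log K-\phi(n_T+n-2)\log\log K+O(1)=\phi\log K\,\big(1+O(\tfrac{\log\log K}{\log K})\big)\nonumber
\end{align}
for every $n\in\{1,\dots,n_T\}$, together with $\phi=\tfrac{M-1}{M}\varrho$, i.e.\ $\varrho/\phi=1+\tfrac1{M-1}$. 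For (\ref{eq:converge_1}): let $E_K$ be the intersection over the $n_T$ iterations of the events in \emph{Lemma}~\ref{lemma:sinr_up_low}, so that by a union bound $\Pr(E_K)\ge1-O(1/\log K)$ and, on $E_K$, every $\gamma_{\mathcal{S}(n)}$ lies within $\phi\log\log\sqrt K=O(\log\log K)$ of $\chi_n$; hence $\log_2(1+\gamma_{\mathcal{S}(n)})=\log_2(\phi\log K)+O(\tfrac{\log\log K}{\log K})$, and summing the $n_T$ terms, $R_{\text{G-THP-Q}}=n_T\log_2(\phi\log K)+O(\tfrac{\log\log K}{\log K})$ on $E_K$. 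Dividing by $n_T\log_2(\varrho\log K)=n_T(\log_2\varrho+\log_2\log K)$ gives a ratio $1+O(1/\log\log K)$ on $E_K$, hence convergence to $1$ in probability. To upgrade to probability one I would use extreme-value theory: for $n=1$ the $\gamma_k(1)$ are genuinely i.i.d.\ with c.d.f.\ (\ref{eq:cdf_exact1}) and $\gamma_{\mathcal{S}(1)}=\max_{k\le K}\gamma_k(1)$ is nondecreasing in $K$, so the classical Borel--Cantelli-along-a-geometric-subsequence argument yields $\gamma_{\mathcal{S}(1)}/(\phi\log K)\to1$ a.s.; for $n\ge2$ one conditions on the previously selected directions, invokes \emph{Lemma}~\ref{lemma:iid_property} (the candidate channels are conditionally i.i.d.\ with a law converging to a fixed one), and repeats, after which $R_{\text{G-THP-Q}}$ is a finite sum of a.s.-convergent terms.

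For (\ref{eq:converge_2}) I need only the easy direction of the passage to expectations. Since $R_{\text{G-THP-Q}}\ge0$ and on $E_K$ one has $\gamma_{\mathcal{S}(n)}\ge\chi_n-\phi\log\log\sqrt K$,
\begin{align}
\mathbb{E}\{R_{\text{G-THP-Q}}\}&\ge\Pr(E_K)\sum_{n=1}^{n_T}\log_2\!\big(1+\chi_n-\phi\log\log\sqrt K\big)\nonumber\\
&=n_T\log_2(\phi\log K)-O\!\Big(\tfrac{\log\log K}{\log K}\Big),\nonumber
\end{align}
by the expansion of $\chi_n$ and $\Pr(E_K)=1-O(1/\log K)$. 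Combining with $\mathbb{E}\{R_{\text{BC}}\}\le n_T\log_2(\varrho\log K)+o(1)$ and subtracting,
\begin{align}
\mathbb{E}\{R_{\text{BC}}\}-\mathbb{E}\{R_{\text{G-THP-Q}}\}\le n_T\log_2\frac{\varrho}{\phi}+O\!\Big(\tfrac{\log\log K}{\log K}\Big),\nonumber
\end{align}
whose right-hand side has $\limsup$ equal to $n_T\log_2(1+\tfrac1{M-1})$; this is (\ref{eq:converge_2}).

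For (\ref{eq:converge_4}) and (\ref{eq:converge_3}): G-THP-Q is a feasible transmission strategy (the normalization $\kappa=\tfrac{M}{M-1}n_T$ respects the power constraint), so $R_{\text{G-THP-Q}}\le R_{\text{BC}}$ pointwise and the gap is nonnegative. The last displayed inequality, together with $\log_2(1+x)\le x/\ln2$, bounds the gap by $O(\tfrac1{M-1})$; and since the first summand there vanishes as $M\to\infty$ while the surviving residual is $O(\tfrac{\log\log K}{\log K})$, the gap is then also $O(\tfrac{\log\log K}{\log K})$ in that regime. Taking the smaller of the two bounds gives (\ref{eq:converge_4}), and letting $M,K\to\infty$ makes both vanish, which is (\ref{eq:converge_3}).

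The main obstacles I anticipate are the two changes in mode of convergence rather than the scaling itself. Making the a.s.\ claim in (\ref{eq:converge_1}) rigorous for $n\ge2$ is the most delicate step: the candidate set's channel law is only \emph{asymptotically} i.i.d.\ Gaussian, so the subsequence-plus-monotonicity argument must be run conditionally on the previously selected directions, with \emph{Lemma}~\ref{lemma:iid_property} controlling the conditional law. Secondly, in (\ref{eq:converge_2}) one should verify that the rare event $E_K^{c}$ cannot inflate $\mathbb{E}\{R_{\text{G-THP-Q}}\}$; this is automatic for the lower bound actually used, but a matching upper bound on $\mathbb{E}\{R_{\text{G-THP-Q}}\}$ would require a crude moment bound, e.g.\ via Cauchy--Schwarz against the extreme-value behaviour of $\max_{k}\|\mathbf{h}_k\|^{2}$.
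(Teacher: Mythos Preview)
Your proposal is correct and follows essentially the same route as the paper: sandwich each $\gamma_{\mathcal{S}(n)}$ via \emph{Lemma}~\ref{lemma:sinr_up_low}, sum to get $R_{\text{G-THP-Q}}=n_T\log_2(\phi\log K)+o(1)$ on a high-probability event, and compare against the Sharif--Hassibi upper bound $\mathbb{E}\{R_{\text{BC}}\}\le n_T\log_2\!\big(1+\varrho(\log K+O(\log\log K))\big)$, using $\varrho/\phi=1+\tfrac{1}{M-1}$ to extract the constant in (\ref{eq:converge_2}) and $\log(1+x)\approx x$ for (\ref{eq:converge_4})--(\ref{eq:converge_3}). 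The paper's own argument is in fact less careful than yours on two points you flagged: it passes directly from $\Pr(\cdot)\ge1-O(1/\log K)$ to ``with probability~1'' without a subsequence/Borel--Cantelli step, and it does not explicitly invoke feasibility (or any lower bound on the gap) when asserting the limit in (\ref{eq:converge_3}) is exactly zero rather than merely $\le0$.
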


\subsection{High SNR or Interference-Limited Regime}
In this subsection, we let SNR go large. In this regime, for each
user $k \in \mathcal{U}_1$, the output SINR becomes
\begin{equation}
\lim_{P\rightarrow \infty} \gamma_{k}(n) = \frac{ \cos^2\theta_k}{
\sin^2\theta_k} \defeq \hat{\gamma}_{k}(1),
\end{equation}
whose c.d.f. has been obtained in \cite{Yoo_limited} as
\begin{align}\label{eq:cdf1_highSNR}
F_{\hat{\gamma}(1)}(x) =  \left\{ \begin{array}{ll} 1-
\frac{2^B}{(1+x)^{n_T-1}}, &
x\geq \frac{1}{\delta}-1 \\
0, &\text{otherwise}
\end{array}
\right..
\end{align}
For each user $k \in \mathcal{U}_n, n=2,\ldots, n_T$, the output
SINR becomes
\begin{equation}
\lim_{P\rightarrow \infty} \gamma_{k}(n) = \omega_k(n) \frac{
\cos^2\theta_k}{ \sin^2\theta_k} \defeq \hat{\gamma}_{k}(n) ,
\end{equation}
whose c.d.f. is given by the following lemma.
\begin{lemma}\label{lemma:cdfn_highSNR}
For each user $k \in \mathcal{U}_n, n =2, \ldots, n_T$ and $x \geq
\frac{1}{\delta} -1 $, the c.d.f. of $\hat{\gamma}_{k}(n)$ is given
by
\begin{align}
& F_{\hat{\gamma}(n)} (x) = 1 - \frac{d_n}{x^{n_T-1}} \nonumber\\&
 ~\times{}_2F_1\left(n_T-1,2n_T-n;2n_T-1; -\frac{1}{x} \right),
\end{align}
where ${}_2F_1 (\cdot)$ is a generalized hypergeometric function
\cite[9.14]{Gradshteyn2000} and $d_n = \frac{2^B \beta(2n_T -n,
n-1)}{\beta(n_T-n+1,n-1)} $.
\end{lemma}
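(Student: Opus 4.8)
The plan is to write $\hat{\gamma}_k(n)$ as a product of two independent random variables and then average over one of them. From the definitions, for $k\in\mathcal{U}_n$ we have $\hat{\gamma}_k(n)=\omega_k(n)\,Z_k$, where $\omega_k(n)=\|\boldsymbol{\xi}_k\|^2$ and $Z_k:=\cos^2\theta_k/\sin^2\theta_k$ is exactly the first-iteration quantity $\hat{\gamma}_k(1)$ whose c.d.f.\ is (\ref{eq:cdf1_highSNR}). The first step is to argue that $\omega_k(n)$ and $Z_k$ are independent in the large-$K$ regime. By \emph{Lemma}~\ref{lemma:iid_property}, conditioned on the previously selected channels the orthonormal vectors $\hat{\mathbf{q}}_1,\ldots,\hat{\mathbf{q}}_{n-1}$ (which are functions of those channels only) appear independent of each channel in $\mathcal{U}_n$, and the direction of each such channel is isotropic; hence the quantized direction $\hat{\mathbf{h}}_k$ is isotropic and independent of $\{\hat{\mathbf{q}}_i\}$, so by (\ref{eq:Gram-Schmidt1}) $\omega_k(n)$ is a function of $\hat{\mathbf{h}}_k$ alone — the squared norm of the projection of a unit isotropic vector onto the $(n_T-n+1)$-dimensional orthogonal complement of $\mathrm{span}\{\hat{\mathbf{q}}_i\}$ — and is therefore $\mathrm{Beta}(n_T-n+1,n-1)$ (\emph{Lemma}~\ref{lemma:pdf_omega_n}). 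On the other hand $\theta_k$ is the quantization-error angle between $\bar{\mathbf{h}}_k$ and $\hat{\mathbf{h}}_k$, which under the RVQ cell approximation of Section~\ref{subsec:precoder_design} is independent of which codeword $\hat{\mathbf{h}}_k$ is selected; the same approximation together with \emph{Lemma}~\ref{lemma:iid_property} ensures that, for $k\in\mathcal{U}_n$, $(\cos^2\theta_k,\sin^2\theta_k)$ obeys the same law as at the first iteration, so $Z_k$ has the c.d.f.\ of $\hat{\gamma}(1)$. Thus $\omega_k(n)$ and $Z_k$ are independent.

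Next I would condition on $\omega_k(n)=\omega$ and use $F_{\hat{\gamma}(n)}(x)=\mathbb{E}_{\omega}\big[F_{\hat{\gamma}(1)}(x/\omega)\big]$. The crucial simplification is that $\omega\in(0,1)$ forces $x/\omega\ge x\ge\frac1\delta-1$ for every admissible $x$, so the \emph{tail} branch $F_{\hat{\gamma}(1)}(x/\omega)=1-2^B(1+x/\omega)^{-(n_T-1)}=1-2^B\,\omega^{n_T-1}(\omega+x)^{-(n_T-1)}$ of (\ref{eq:cdf1_highSNR}) always applies, and no information about $F_{\hat{\gamma}(1)}$ below $\frac1\delta-1$ is needed. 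Inserting the Beta density $f_{\omega(n)}(\omega)=\beta(n_T-n+1,n-1)^{-1}\omega^{n_T-n}(1-\omega)^{n-2}$ yields
\begin{align}
F_{\hat{\gamma}(n)}(x) &= 1 - \frac{2^B}{\beta(n_T-n+1,n-1)}\,\frac{1}{x^{n_T-1}}\nonumber\\
&\quad\times \int_0^1 \frac{\omega^{2n_T-n-1}(1-\omega)^{n-2}}{(1+\omega/x)^{n_T-1}}\,{\rm d}\omega.\nonumber
\end{align}

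The last step evaluates this integral with Euler's representation ${}_2F_1(a,b;c;z)=\beta(b,c-b)^{-1}\int_0^1 t^{b-1}(1-t)^{c-b-1}(1-zt)^{-a}\,{\rm d}t$, valid for $\mathrm{Re}(c)>\mathrm{Re}(b)>0$ and $z\notin[1,\infty)$. Matching powers gives $b=2n_T-n$, $c-b=n-1$ (hence $c=2n_T-1$), $a=n_T-1$ and $z=-1/x$; the conditions $2n_T-1>2n_T-n>0$ hold since $2\le n\le n_T$, and $z=-1/x<0$ lies off the branch cut. Hence the integral equals $\beta(2n_T-n,n-1)\,{}_2F_1(n_T-1,2n_T-n;2n_T-1;-1/x)$, and factoring out $d_n=2^B\beta(2n_T-n,n-1)/\beta(n_T-n+1,n-1)$ gives the stated formula.

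The Euler-integral identification and the Beta bookkeeping are routine; the only step demanding genuine care is the first one — justifying that $\omega_k(n)$ is independent of $\theta_k$ and that $\theta_k$ retains its first-iteration distribution for $k\in\mathcal{U}_n$ — which rests entirely on \emph{Lemma}~\ref{lemma:iid_property} and the RVQ cell approximation, exactly as in the finite-SNR derivation behind \emph{Lemma}~\ref{lemma:SINR_UE_selection}. As a consistency check, the same expression should also emerge from \emph{Lemma}~\ref{lemma:SINR_UE_selection} by letting $P\to\infty$ (equivalently $\phi\to\infty$), which corroborates both the constant $d_n$ and the hypergeometric parameters.
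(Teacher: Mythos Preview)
Your proof is correct and follows essentially the same route as the paper's: both condition on $\omega_k(n)$, exploit $\omega\in(0,1)$ so that $x/\omega\ge\frac{1}{\delta}-1$ and the tail branch of $F_{\hat\gamma(1)}$ applies throughout, and then reduce the resulting Beta-weighted integral to a ${}_2F_1$ via the Euler integral representation (the paper invokes the equivalent identity \cite[3.197.3]{Gradshteyn2000}). Your explicit justification of the independence of $\omega_k(n)$ and $\theta_k$ is more detailed than the paper's one-line assertion, but the argument is the same.
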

\begin{proof}
See Appendix \ref{proof:lemma_cdfn_highSNR}.
\end{proof}
The tail distribution of $F_{\hat{\gamma}(n)}(x)$ ($x \to \infty$)
is obtained as
\begin{align}\label{eq:tail_cdfn_highSNR}
F_{\hat{\gamma}(n)}(x) =  1- \frac{d_n}{x^{n_T-1}} +
O\left(\frac{1}{x^{n_T}}\right) ,
\end{align}
where we have used the series expansion ${}_2F_1(a,b;c;x) =
\sum_{l=0}^{\infty} \frac{(a)_l (b)_{l}}{(c)_{l}} \frac{x^{l}}{l!}$,
and $(\cdot )_r $ is the Pochammer symbol, which is defined as
$(a)_k = a(a + 1)\cdots (a + k - 1)$ with $(a)_0 = 1$.

The output SINR of the selected user at the $n$-th iteration becomes
$\hat{\gamma}_{\mathcal{S}(n)} = \arg \max_{k \in \mathcal{U}_n}
\hat{\gamma}_k (n) $, whose extremal distribution is given by the
following lemma.
\begin{lemma}\label{lemma:extream_cdfn_highSNR}
For $\hat{\gamma}_{\mathcal{S}(1)}$, we have
\begin{align}\label{eq:extre_cdf1_highSNR}
&\text{Pr} \bigg\{ \left(\frac{2^B
K}{\log\sqrt{K}}\right)^{\frac{1}{n_T-1}}  -1 \leq
\hat{\gamma}_{\mathcal{S}(1)} \leq \nonumber \\&
\hspace{0.5cm}\left(2^B K\log\sqrt{K}\right)^{\frac{1}{n_T-1}}  -1
\bigg\}  \geq 1- O \left(\frac{1}{\log K}\right),
\end{align}
and for $\hat{\gamma}_{\mathcal{S}(n)}$, $n =2,\ldots,n_T$, we have
\begin{align}\label{eq:extre_cdfn_highSNR}
&\text{Pr} \left\{ \left(\frac{d_n
K}{\log\sqrt{K}}\right)^{\frac{1}{n_T-1}} \leq
\hat{\gamma}_{\mathcal{S}(n)} \leq \left(d_n
K\log\sqrt{K}\right)^{\frac{1}{n_T-1}} \right \} \nonumber \\ &
\hspace{1cm}\geq 1- O \left(\frac{1}{\log K}\right),
\end{align}
where $d_n$ is defined in \emph{Lemma} \ref{lemma:cdfn_highSNR}.
\end{lemma}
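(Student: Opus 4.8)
The plan is to establish both parts of the lemma by the same extreme-value argument used for Lemma \ref{lemma:max_gamma_bound}, now applied to random variables with a pure power-law tail rather than an exponential-times-power-law tail. Throughout, I condition on a realization of the previously selected channel vectors $\mathbf{h}_{\mathcal{S}(1)},\ldots,\mathbf{h}_{\mathcal{S}(n-1)}$; by Lemma \ref{lemma:iid_property} the variables $\{\hat{\gamma}_k(n)\}_{k\in\mathcal{U}_n}$ are then i.i.d., so that $\text{Pr}\{\hat{\gamma}_{\mathcal{S}(n)}\le x\}=F_{\hat{\gamma}(n)}(x)^{|\mathcal{U}_n|}$ with $|\mathcal{U}_n|=K-n+1$. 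For $n=1$ the relevant c.d.f. is the exact expression (\ref{eq:cdf1_highSNR}); for $n\ge 2$ it is the tail expansion $1-F_{\hat{\gamma}(n)}(x)=d_n\,x^{-(n_T-1)}\bigl(1+O(1/x)\bigr)$ of (\ref{eq:tail_cdfn_highSNR}), with $d_n$ as in Lemma \ref{lemma:cdfn_highSNR}. Setting $d_1:=2^B$ lets me treat both cases at once. Denote by $a_K$ and $b_K$ the left and right endpoints appearing in each displayed inequality of the lemma.

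The two elementary ingredients are the inequalities $1-Nt\le(1-t)^N\le e^{-Nt}$ for $t\in[0,1]$, and the observation that $a_K$ and $b_K$ both grow like $\Theta\!\left(K^{1/(n_T-1)}\right)$ as $K\to\infty$ with $B$ and $n_T$ fixed. The latter has two consequences I need: eventually $a_K>\tfrac{1}{\delta}-1$, so the closed-form tail is valid at both endpoints; and the correction factor $\bigl(1+O(1/x)\bigr)$ evaluated at $x=a_K$ or $x=b_K$ equals $1+o(1)$ and may be absorbed into the error terms below (for $n=1$ there is no such factor and the computations are exact).

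Next I would carry out the two tail estimates. For the right endpoint $b_K$, direct substitution gives $1-F_{\hat{\gamma}(n)}(b_K)=\dfrac{1+o(1)}{K\log\sqrt{K}}$, so, using $|\mathcal{U}_n|\le K$ and $1-(1-t)^N\le Nt$,
\[
\text{Pr}\{\hat{\gamma}_{\mathcal{S}(n)}>b_K\}=1-F_{\hat{\gamma}(n)}(b_K)^{|\mathcal{U}_n|}\le |\mathcal{U}_n|\bigl(1-F_{\hat{\gamma}(n)}(b_K)\bigr)=O\!\left(\tfrac{1}{\log K}\right).
\]
For the left endpoint $a_K$, substitution gives $1-F_{\hat{\gamma}(n)}(a_K)=\dfrac{(1+o(1))\log\sqrt{K}}{K}$, hence $|\mathcal{U}_n|\bigl(1-F_{\hat{\gamma}(n)}(a_K)\bigr)\ge\tfrac{1}{2}\log\sqrt{K}$ for $K$ large, and therefore
\[
\text{Pr}\{\hat{\gamma}_{\mathcal{S}(n)}<a_K\}\le F_{\hat{\gamma}(n)}(a_K)^{|\mathcal{U}_n|}\le e^{-|\mathcal{U}_n|(1-F_{\hat{\gamma}(n)}(a_K))}\le e^{-\frac{1}{2}\log\sqrt{K}}=K^{-1/4}=o\!\left(\tfrac{1}{\log K}\right).
\]
A union bound then gives $\text{Pr}\{a_K\le\hat{\gamma}_{\mathcal{S}(n)}\le b_K\}\ge 1-O(1/\log K)$, and since this holds for every realization of the conditioning channels with a constant uniform in that realization, the same bound survives averaging, yielding (\ref{eq:extre_cdf1_highSNR}) and (\ref{eq:extre_cdfn_highSNR}).

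I do not expect a genuine obstacle: this is the high-SNR counterpart of Lemma \ref{lemma:max_gamma_bound}, and the mechanics are routine given the tail expansions already in hand. The only points requiring care are checking that the $O(1/x^{n_T})$ remainder in $F_{\hat{\gamma}(n)}$ and the $-1$ shift in the $n=1$ endpoints are asymptotically negligible once the $\Theta\!\left(K^{1/(n_T-1)}\right)$ thresholds are substituted, and that $|\mathcal{U}_n|=K-n+1$ differs from $K$ only by a factor $1+o(1)$, which does not affect any of the leading-order constants above.
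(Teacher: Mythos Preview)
Your argument is correct. The direct extreme-value computation you outline --- evaluating $1-F_{\hat{\gamma}(n)}$ at the two thresholds, then applying $1-(1-t)^N\le Nt$ for the upper tail and $(1-t)^N\le e^{-Nt}$ for the lower tail --- goes through exactly as you describe, and the checks you flag (validity of the tail formula once $a_K>\tfrac{1}{\delta}-1$, absorption of the $O(1/x)$ correction, and $|\mathcal{U}_n|=K(1+o(1))$) are the right ones.

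The paper takes a slightly different route for $n\ge 2$: rather than redo the extreme-value estimate, it introduces the change of variable $\acute{\gamma}(n)=\bigl(2^B/d_n\bigr)^{1/(n_T-1)}\hat{\gamma}(n)-1$, observes that the tail of $F_{\acute{\gamma}(n)}$ then coincides (to leading order) with the tail of $F_{\hat{\gamma}(1)}$ in (\ref{eq:cdf1_highSNR}), and simply invokes the $n=1$ result (\ref{eq:extre_cdf1_highSNR}), which is itself quoted from \cite{Yoo_limited}. This reduction is shorter if one is willing to cite the $n=1$ case as known, but it is less self-contained. Your approach, by contrast, handles all $n$ uniformly with one explicit argument and does not rely on the external reference; it also makes transparent exactly which inequalities drive the $O(1/\log K)$ rate. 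Either method is perfectly adequate here.
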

\begin{proof}
See Appendix \ref{proof:lemma_extream_cdfn_highSNR}
\end{proof}
With \emph{Lemma} \ref{lemma:extream_cdfn_highSNR}, for large $K$
the average sum rate can be approximated as
\begin{align}\label{eq:sum_rate_highSNR}
& \hspace{-0.5cm}\mathbb{E}\left\{ R_{\text{high-SNR}}\right\} \nonumber\\
&= \mathbb{E}\left\{
\sum_{i=1}^{n_T}\log_2\left(1+\hat{\gamma}_{\mathcal{S}(i)} \right)
\right\} \nonumber\\
&\thickapprox \frac{n_T}{n_T-1} \left(B + \log_2 K \right)  +
\frac{1}{n_T-1} \nonumber\\& \hspace{0.5cm} \times \sum_{n=2}^{n_T}
\log_2 \left(\frac{\beta(2n_T -n, n-1)}{\beta(n_T-n+1,n-1)} \right)
\nonumber\\& \hspace{0.5cm} + O(\log_2\log K),
\end{align}
where $R_{\text{high-SNR}}$ represents $R_{\text{G-THP-Q}}$ in
(\ref{eq:throughtput}) in the high-SNR regime ($P \rightarrow \infty
$).

\subsection{Discussion of Results}\label{subsec:discusses}
In this subsection, we will discuss the analytical results obtained
above.
\begin{enumerate}

\item
Asymptotically, (41) shows that for fixed finite $n_T$ and SNR, our
scheme can achieve the maximum spatial multiplexing gain of $n_T$,
and also the maximum multi-user diversity gain up to first order
(i.e., the SNR scales with $\log K$, and the average sum rate scales
as $\log \log K$).

\item
As explained in Section \ref{sec:model}, the precoding
loss\cite{Windpassinger04} caused by TH precoding can be negligible
for moderate sizes $M$ and vanishes as $M$ increases. \emph{Theorem}
\ref{theorem:sum_rate_THP_feedback} shows that if we ignore this
loss, the asymptotic sum rate performance achieved by this quantized
CSI-based TH precoding can converge to that of the MIMO broadcast
channel. 

\item
The asymptotic results in (\ref{eq:chi_n})--(\ref{eq:chi_1}) are
valid only when both $K$ and $\frac{2^B K }{\phi^{n_T-1}}$ are
large. Moreover, the c.d.f.s (\ref{eq:cdf_exact1}) and
(\ref{eq:cdf_exact2}) are valid only for $\gamma_{\mathcal{S}(n)}
\geq 2^{\frac{B}{n_T-1}}-1$. Thus, when either $B$ or $\phi (P)$ is
sufficiently large that a given $K$ is not large enough to satisfy
the aforementioned conditions, then some of these conditions may
fail. Also given a target SINR or sum rate performance, the $B$, $K$
and $\phi (P)$ should scale such that $\frac{c_n K}{\phi^{n_T -1}} $
is a constant. Or equivalently,
\begin{align} B +\log_2 K = (n_T -1) \log_2 P +c \end{align}
for some constant $c$.  In \cite{Yoo_limited}, a very similar result
has also been observed for the multi-user diversity gain in the
first-order terms $O(\log K)$ for the downlink MU-MIMO systems,
employing ZFBF based on the quantized CSI.

\item
It is instructive to compare the performance of the proposed
algorithm with quantized CSI with that of the ZFDPC algorithm with
perfect CSI. For the system with perfect CSI, we refer to our
previous work in \cite{L.sun_multiuser}. We have noted above that TH
precoding is a practical implementation of ZFDPC proposed in
\cite{Caire03}. If we ignore the precoding loss and shaping loss,
the performance of TH precoding is the same as that of ZFDPC.
Moreover, although the additional semi-orthogonal constraint is
imposed by the ZFDPC algorithm in \cite{L.sun_multiuser}, it was
shown that this constraint asymptotically does not reduce the
multi-user diversity gain in either the first ($\log(K)$) or the
second-order ($\log\log(K)$) terms. For the sake of clarity, in the
following we rewrite the asymptotic distribution of the output SNR
$\zeta_{\pi(n)}$ for the $n$-th selected user which was derived in
\cite{L.sun_multiuser}.
\begin{eqnarray}\label{eq:sinr_up_low2}
& \hspace{-2cm}\text{Pr}\{ \varpi_n - \varrho \log\log \sqrt K \leq
\zeta_{\pi(n)} \leq  \nonumber\\& \upsilon_n + \varrho \log\log
\sqrt K  \} \geq 1 - O\bigg( \frac{1}{\log K} \bigg),
\end{eqnarray}
where $\varpi_n = \varrho \log \bigg(\frac{K}{\varepsilon_n}\bigg) +
\varrho (n_T-n) \log\log \bigg(\frac{K }{\varepsilon_n}\bigg)$, $
\upsilon_n = \varrho \log \left(\frac{K}{\epsilon_n}\right) +
\varrho (n_T-n) \log\log\left(\frac{K}{\epsilon_n}\right)$ and
$\varepsilon_n = \frac{ \Gamma(n_T-n+1)(n-1)^{n-1}}{\Gamma(n)}$,
$\epsilon_n = \Gamma(n_T-n+1)$.
If we compare the asymptotic distribution of the output SINR for the
quantized CSI case with that of the output SNR for the perfect CSI
case, which are given in (\ref{eq:sinr_up_low}) and
(\ref{eq:sinr_up_low2}) respectively, we can see that there is an
additional term $\Delta = \frac{2^B}{\phi^{n_T-1}}$ that affects the
multi-user diversity gain in both first-order terms $O(\log K)$ and
second-order terms $O(\log\log K)$ (omitting the constant terms only
related to $n_T$ and $n$). We can see that, for a given SNR,
increasing the feedback rate can improve the effect of multiuser
diversity gain and increase the sum rate. For a fixed feedback rate,
increasing transmit power can reduce the effect of the multiuser
diversity gain.

\item
The asymptotic distributions in (\ref{eq:sinr_up_low}) and
(\ref{eq:sinr_up_low2}) also show a very interesting result that,
for the perfect CSI case, whilst the coefficient of the first order
term $O(\log K)$ is unaffected by the number of iterations $n$, the
coefficient of the second-order term $O(\log \log K)$ increases
linearly with $n_T$ and decreases linearly with $n$. In contrast,
for the quantized CSI case, the coefficient of the second-order term
$O(\log \log K)$ decreases linearly with both $n_T$ and $n$. This
difference between perfect CSI and quantized CSI cases in the
coefficient of the term $O(\log \log K)$ is also caused by MUI given
by the second term of (\ref{eq:Rx_sig_limited}).

\item

Although the quantized CSI-based and the perfect CSI-based TH
precoding schemes achieve the same asymptotic average sum rate as $K
\rightarrow \infty$, the speed of convergence to this optimal sum
rate can be very different. 
This performance difference is due to the MUI caused by quantized
CSI at the transmitter side. We can see this clearly from the last
line of (\ref{eq:difference}) in the proof of \emph{Theorem}
\ref{theorem:sum_rate_THP_feedback}. In fact, the term $O
\left(\frac{1}{\log K}\right)$ in the first term of
(\ref{eq:difference}) contains $\Delta$ which is due to MUI. For
finite $K$, this term can been even larger than the term $O
\bigg(\frac{\log\log K}{\log K} \bigg)$. Thus, for finite $K$ there
is a gap in the average sum rates between the two cases.

\item
In the high-SNR regime, we can also observe the similar
interchangeability between $B$ and $\log K$ as we have observed for
general SNR values. In addition, from (\ref{eq:sum_rate_highSNR}) it
is easy to see that for $K$ and finite $B$ the sum rate converges to
a constant as $P \rightarrow \infty$. In contrast to the previous
finding for the fixed finite SNR regime that the asymptotic sum rate
scaling increases only via a factor of $\log_2 \log K$ in the
multi-user diversity gain, the asymptotic sum rate
(\ref{eq:sum_rate_highSNR}) grows via a factor of $\log_2 K $. Thus,
multiuser diversity gain is even more beneficial in high-SNR regime.
However, the spatial multiplexing gain decreases remarkably from
$n_T$ to $\frac{n_T}{n_T-1}$. The same results can also be observed
in \cite{Yoo_limited} in the context of ZFBF.

\end{enumerate}

\section{Numerical Results}
In this section we present some numerical results. We assume $n_T =
4$ and $M$ is large enough such that the precoding loss can be
ignored\footnote{The modulation order $M$ only affects the sum rate
performance through the precoding loss $\frac{M}{M-1}$. By using
sufficiently large $M$, the precoding loss can arbitrarily approach
$1$ (no precoding loss). Thus, in simulations we have ignored the
effect of modulation order.}. The SNR of the system is defined to be
equal to $P$.

In Fig. \ref{fig:sum_rate_SNR}, we present the average sum rates
versus SNR $P$ for a system with $K = 100$ users, and various CDI
quantization levels $B = 8, 12$ bits. As expected, it is seen from
the figure that, with quantized CSI, the average sum rates of TH
precoding and ZFBF both approach the corresponding average sum rates
with perfect CSI as $B$ increases. We can also observe that, with
perfect CSI, when the number of users is large, the average sum
rates of TH precoding and ZFBF can be quite similar. However, the
average sum rates of TH precoding and ZFBF with quantized CSI can be
quite different.

\begin{figure}[t]
\centering
\includegraphics[width= 0.95\columnwidth]{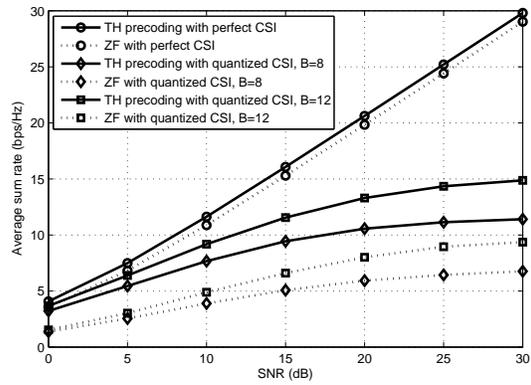}
\caption{The average sum rate versus system SNR $P$ under $n_T =4$
and various $B$.} \label{fig:sum_rate_SNR}
\end{figure}
\begin{figure}[t]
\centering
\includegraphics[width= 0.95\columnwidth]{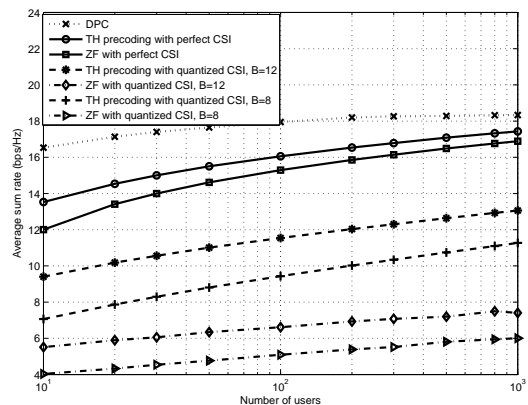}
\caption{The average sum rate versus number of users $K$ under $n_T
=4$, $P= 15$ dB and various $B$.} \label{fig:sum_rate_user}
\end{figure}
\begin{figure}[t]
\centering
\includegraphics[width= 0.95\columnwidth]{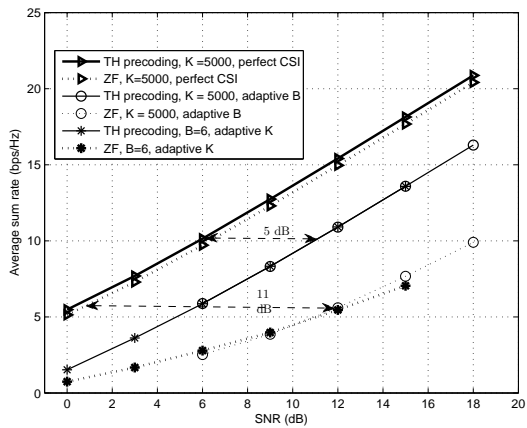}
\caption{The average sum rate versus SNR $P$ under $n_T =4$, and
adaptive $B$ or $K$ such that $B + \log_2 K = (n_T-1) \log_2 P +
8.32$.} \label{fig:sum_rate_relation}
\end{figure}
\begin{figure}[t]
\centering
\includegraphics[width= 0.95\columnwidth]{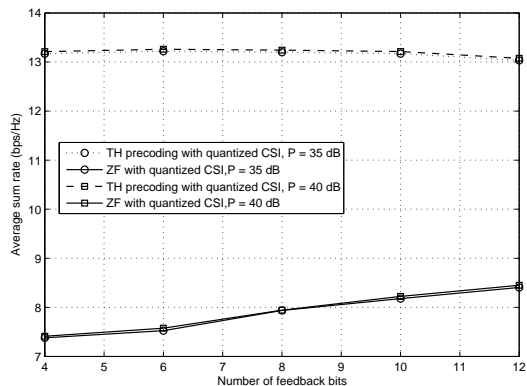}
\caption{The average sum rate versus the number of feedback bits $B$
in high-SNR regime ($P=35$ dB, $P=40$ dB) under $n_T =4$, and
adaptive $B$ or $K$ such that $B + \log_2 K = 16.55$.}
\label{fig:sum_rate_highSNR_relation}
\end{figure}

Fig. \ref{fig:sum_rate_user} plots the average sum rate as a
function of the number of users achieved by TH precoding and ZFBF
based on both perfect CSI and quantized CSI at the transmitter side.
A curve is also presented for the optimal DPC which acts as an
achievable upper bound, and is computed using the algorithm from
\cite{Vishwanath03}. As evident from the figure, the average sum
rate of TH precoding scheme with perfect CSI converges slowly to the
average sum rate achieved with optimal DPC as $K$ grows large. The
curves for the quantized CSI cases are far below those for the
perfect CSI cases at finite $K$, which is due to MUI as explained in
Section \ref{sec:asymptotic_sum_rate_perforamnce}  for TH precoding
and in \cite{Yoo_limited} for ZFBF, respectively. Still, we can see
the trend of convergence to the optimal DPC for the quantized CSI
case. In addition, TH precoding performs significantly better than
linear ZFBF for both the perfect CSI and quantized CSI cases. For
finite $K$, there is a gap between the average sum rates achieved by
TH precoding and ZFBF for both the perfect CSI and quantized CSI
cases. Moreover, the gap between the average sum rates achieved by
TH precoding and ZFBF for the systems with quantized CSI increases
as $K$ increases, whereas this gap in the systems with perfect CSI
decreases as $K$ increases. In fact, we see that the curve for ZFBF
with perfect CSI converges slowly to that of DPC (thus, converge to
that of TH precoding) as $K$ grows large. However, the speed of
convergence is slower than that of TH precoding. This result has
also been proved in \cite{L.sun_multiuser}.

Fig. \ref{fig:sum_rate_relation} investigates the interchangeability
between $B$ and $\log_2 K$, where we adapt $B$ and $K$ as
\begin{align}\label{eq:relation_BK}
B -B_0 + \log_2\left(\frac{K}{K_0} \right)  = (n_T-1) \log_2 \left(
\frac{P}{P_0} \right),
\end{align}
so that a constant gap from the sum rate of the perfect CSI case can
be maintained. We consider a system in the large-user regime such
that $K_0 = 5000, B_0=6$, $P_0 = 10 $ dB. This corresponds to $B +
\log_2 K = (n_T -1)\log_2 P + 8.32$. We either fix $K = K_0$ and
adapt $B$ as a function of $P$, or fix $B = B_0$ and adapt $K$ as a
function of $P$. For comparison, we plot the results for both TH
precoding and ZFBF. We find that, in the large-user regime, the
average sum rate achieved by TH precoding and ZFBF with perfect CSI
are very close to each other, whereas the corresponding results with
quantized CSI are very different. Particularly, we can observe a
constant SNR gap of about $5$ dB from the perfect CSI curve for TH
precoding, which confirms the analysis in Section
\ref{sec:asymptotic_sum_rate_perforamnce}. Whereas the corresponding
SNR gap for ZFBF is about $11$ dB from the perfect CSI curve and
this gap increases a little as SNR increases in the large-user
regime.

Fig. \ref{fig:sum_rate_highSNR_relation} investigates the
interchangeability between $B$ and $\log_2 K$ in the high-SNR regime
($P=35$ dB, $P=40$ dB) for both TH precoding and ZFBF, where we
adapt $B$ and $K$ as (\ref{eq:relation_BK}).
We consider a system in the large-user high-SNR regime such that
$K_0 = 3000, B_0=5$ and $P_0 = P$. This corresponds to $B + \log_2 K
= 16.55$. We adapt $K$ as a function of $B$. For comparison, we plot
the results for both TH precoding and ZFBF. It is seen that the
average sum rate achieved by TH precoding almost remains constant as
$B$ increases, and also doesn't change much as SNR increases from
$35$ dB to $40$ dB, which aligns with the
analysis in Section IV. 
However, the average sum rate of ZFBF increases a little as the
number of feedback bits $B$ increases.
\begin{figure*}[!t]
\begin{eqnarray}\label{eq:joint_density_2}
f (t_1,\ldots,t_{n-1}) = \left\{ \begin{array}{ll}
\frac{\Gamma(n_T)}{\Gamma(n_T-n+1)} \left(1 - \sum_{i=1}^{n-1}t_i
\right)^{n_T-n}, & t_i \geq 0,  i= 1,\ldots, n-1,
\sum_{i=1}^{n-1} t_i \leq 1\\
0, &\text{otherwise}
\end{array}
\right..
\end{eqnarray}
\hrulefill
\end{figure*}

\section{Conclusion}
We have investigated the sum rate of a quantized CSI-based TH
precoding scheme for a MU-MIMO system, denoted G-THP-Q. We have
proved that for fixed finite SNR and finite $n_T$, G-THP-Q can
achieve the optimal sum rate scaling of the MIMO broadcast channel,
and that the average sum rate converges to the average sum capacity
of the MIMO broadcast channel as the number of users $K$ grows
large. In addition, our analytical results have provided important
insights into the effect of quantized CSI on the multiuser diversity
gain in both the first- and the second-order terms. We have studied
and derived the tradeoffs between the number of feedback bits, the
number of users and SNR. Particularly, a constant SNR gap from the
result of the perfect CSI case can be achieved by simultaneously
interchanging the number of users and feedback bits.

\appendices

\section{Proof of Lemma
\ref{lemma:pdf_omega_n}}\label{proof:lemma_pdf_omega_n} According to
\emph{Lemma} \ref{lemma:iid_property}, for large $K$, the channel
direction vector $\bar{\mathbf{h}}_k$, for $k\in \mathcal{U}_n$, is
an isotropically distributed unit vector on the $n_T$-dimensional
complex unit hypersphere. Using RVQ, the same results holds for the
quantized channel direction vector $\hat{\mathbf{h}}_k$, for $k\in
\mathcal{U}_n$. In addition, the subspace spanned by the orthonormal
basis $\hat{\mathbf{q}}_1, \ldots, \hat{\mathbf{q}}_{n-1}$ becomes
independent of $\bar{\mathbf{h}}_k$, for $k\in \mathcal{U}_n$. Thus,
without loss of generality we can assume $\mathbf{q}_i=
\mathbf{e}_i$, where $\mathbf{e}_i$ is the $i$-th row of the
identity matrix $\mathbf{I}_{n_T}$.

For $k\in \mathcal{U}_n$, let $\hat{\mathbf{h}}_k = [m_1, \cdots,
m_{n_T}]^T$, and $t_i = | \hat{\mathbf{h}}_k \hat{\mathbf{q}}_i^H
|^2 $. Then, with $\mathbf{q}_i= \mathbf{e}_i, i=1,\ldots,n-1$, we
have $t_i = | m_i|^2 $ for $1 \leq i \leq n-1$. The joint p.d.f. of
$t_1,\ldots,t_{n-1}$ has been obtained in \cite{L.sun_multiuser} and
is shown in (\ref{eq:joint_density_2}) at the top of this page. Now
we want to obtain the distribution of $w_1= \sum_{l=1}^{n-1} t_l$.
We define the following transformation of variables
\begin{align}\label{eq:change_variable}
w_1 =\sum_{l=1}^{n-1} t_l, ~~w_i=t_i ~~~\text{for}~ i =2, \ldots,
n-1.
\end{align}
The corresponding Jacobian can be easily obtained as $J = 1$. Thus
the joint p.d.f. of $w_1, \ldots, w_{n-1}$ is
\begin{eqnarray}
& \hspace{-2cm }f_{w_1,\ldots,w_{n-1}}\left( x_1, \ldots,
x_{n-1}\right) \nonumber
\\ & = \frac{\Gamma(n_T)}{\Gamma(n_T-n+1)} \left(1 - x_1
\right)^{n_T-n}.
\end{eqnarray}
Since $ 0 \leq t_i \leq 1$, we have $0 \leq t_1 = w_1 -
\sum_{i=2}^{n-1}w_i \leq 1$. According to
(\ref{eq:change_variable}), the region of the random variables after
transformation can be obtained as $\mathcal{D} = \big\{ \left( w_1,
\ldots, w_{n-1} \right) ~|~ 0\leq \sum_{i=2}^{n-1} w_i\leq w_1 \leq
1,  0 \leq w_i \leq 1 ~\text{for} ~i = 2, \ldots, n-1 \big\}$. Then
the marginal distribution of $w_1$ can be obtained as
\begin{eqnarray}
&& \hspace{-0.5cm}f_{w_1} (x) \nonumber \\ && = \int \cdots
\int_{\mathcal{D}} f
(x,x_2 \ldots,x_{n-1}) ~{\rm d}x_2 \cdots{\rm d}x_{n-1} \nonumber \\
&& = \int \cdots \int_{\mathcal{D}}
\frac{\Gamma(n_T)}{\Gamma(n_T-n)} \left(1 - x_1 \right)^{n_T-n}
~{\rm d}x_2 \cdots{\rm d}x_{n-1} \nonumber\\
&& \mathop = \limits^{(a)} \frac{\Gamma(n_T)}{\Gamma(n_T-n)}  (1-
x)^{n_T-n} \frac{x^{n-2}}{\left(n-2\right) !} \nonumber,
\end{eqnarray}
where in $(a)$ we have used the identity $\mathop {\int \int {
\cdots \int {}} }\limits_{\scriptstyle \sum_{i=1}^{n}t_i \leq h
\hfill \atop \scriptstyle t_1 \geq 0, \ldots, t_{n} \geq 0  \hfill}
~{\rm d} t_1 \cdots ~{\rm d} t_n = \frac{h^n}{n!} $
\cite{Gradshteyn2000}. We find that $w_1$ follows beta distribution
with shape parameters $(n-1)$ and $ (n_T-n+1)$.
According to (\ref{eq:sum_l}) we have $\omega_k(n) =
1-\sum_{j}^{n-1}| \hat{\mathbf{h}}_{k} \hat{\mathbf{q}}_{j}^H |^2 =
1- w_1$. Thus $\omega_k(n) \sim \mathrm{Beta}(n_T-n+1,n-1)$, whose
p.d.f. is given by (\ref{eq:pdf_omega_n}).

\section{Proof of \emph{Lemma}
\ref{lemma:SINR_UE_selection}}\label{proof:lemma_SINR_UE_selection}
From \emph{Lemma} \ref{lemma:S_I_distribution} it is clear that
$\gamma_{k}(n)$ has the same distribution of $\gamma = \frac{\phi
\omega \left( X + (1-\delta)Y \right)}{\phi \delta Y +1}$, where
$\omega$ has the same distribution of $\omega_k(n)$, and $X, Y$ are
defined in \emph{Lemma} \ref{lemma:S_I_distribution}. Then we will
derive the c.d.f. of $\gamma$ as
\begin{align}
& \mathrm{Pr}(\gamma \geq x) \nonumber \\ &= \mathrm{Pr} \left(
\frac{\phi \omega \left( X +
(1-\delta)Y \right)}{\phi \delta Y +1} \geq x\right)\nonumber\\
& = \int_{0}^{1} f_{\omega}(z){\rm d}z  \int_{0}^{\infty}
\mathrm{Pr}\left( X \geq \left(\frac{\delta}{z}y+ \frac{1}{\phi
z}\right)x - (1-\delta)y \right) \nonumber \\ & \hspace{1cm} \times
f_{Y}(y) {\rm d}y.
\end{align}
If $x \geq \frac{1-\delta}{\delta}$, then $\left(\frac{\delta}{z}y+
\frac{1}{\phi z}\right)x - (1-\delta)y$ is nonnegative for any $y
\geq 0$ and $0 \leq z \leq 1$. Using the definition of random
variables $X$ and $Y$, the above can be further written as
\begin{align}
& \mathrm{Pr}(\gamma \geq x) \nonumber \\ &= \int_{0}^{1}
f_{\omega(n)}(z){\rm d}z \int_{0}^{\infty}  e^{-
\left(\frac{\delta}{z}y+ \frac{1}{\phi z}\right)x +
(1-\delta)y}f_{Y}(y) {\rm d}y \nonumber\\
& = \int_{0}^{1} e^{-\frac{x}{\phi z}} f_{\omega}(z){\rm d}z
\int_{0}^{\infty}
e^{- \left[\frac{\delta}{z}x -(1-\delta)\right] y } \frac{y^{n_T-2}e^{-y}}{(n_T-2)!} {\rm d}y  \nonumber\\
&=  \int_{0}^{1} e^{-\frac{x}{\phi z}} f_{\omega}(z){\rm d}z
\int_{0}^{\infty} \frac{ e^{-  \delta \left( \frac{x}{z}+1 \right) y
}y^{n_T-2}}{(n_T-2)!} {\rm d}y  \nonumber\\
&= \int_{0}^{1}\frac{2^B e^{-\frac{x}{\phi z}}}{ \left(
\frac{x}{z}+1 \right)^{n_T-1}}
\frac{1}{\beta(n_T-n+1,n-1)} \nonumber \\ & \hspace{1cm} \times z^{n_T-n}(1- z)^{n-2}{\rm d}z\nonumber\\
\label{eq:int_1} &= c_n \int_{0}^{1}\frac{ e^{-\frac{x}{\phi z}}}{
\left( z+ x \right)^{n_T-1}}z^{2n_T-n-1}(1- z)^{n-2}{\rm d}z,
\end{align}
where $c_n$ is as defined in the lemma statement.

With the change of variables $t = \frac{x}{z}$, (\ref{eq:int_1}) can
be written as
\begin{align}
& \mathrm{Pr}(\gamma \geq x) \nonumber \\ &=  c_n
\int_{x}^{\infty}\frac{ e^{-\frac{t}{\phi }}}{ \left( \frac{x}{t}+ x
\right)^{n_T-1}}\left( \frac{x}{t} \right)^{2n_T-n-1}  \left(1-
\left( \frac{x}{t} \right)\right)^{n-2}\nonumber
\\ &\hspace{2cm}\times\frac{x}{t^2}{\rm d}t\nonumber\\
&=c_n  x^{n_T-n+1} \int_{x}^{\infty} e^{-\frac{t}{\phi }}
\frac{(t-x)^{n-2}}{\left( t+1\right)^{n_T-1}t^{n_T}}{\rm
d}t\nonumber\\
&=c_n  x^{n_T-n+1}  V(n-1;-n_T+2;-n_T+1;\frac{1}{\phi}; x).
\end{align}
Thus \emph{Lemma} \ref{lemma:SINR_UE_selection} is proved.

\section{Proof of \emph{Lemma}
\ref{lemma:cdf_upper_lower}}\label{proof:lemma_cdf_upper_lower}
Using the geometric-mean inequality we have
\begin{align}
(t+1)^{n_T-1}t^{n_T} & < \left( \frac{(n_T-1)(t+1)+{n_T}t}{2n_T-1}
\right)^{2n_T -1} \nonumber\\
& = \left( t+ \frac{n_T-1}{2n_T-1} \right) ^{2n_T-1}.
\end{align}
Thus, for $ \frac{1}{\delta}< x < \infty$,
\begin{align}
& V(n-1;-n_T+2;-n_T+1;\frac{1}{\phi}; x) \nonumber\\ &
> \int_{x}^{\infty} e^{-\frac{t}{\phi }} \frac{(t-x)^{n-2}}{\left(
t+
\frac{n_T-1}{2n_T-1} \right) ^{2n_T-1}}{\rm d}t \nonumber\\
\label{eq:V_1} &=  \int_{x+ \frac{n_T-1}{2n_T-1} }^{\infty}
e^{-\frac{s-\frac{n_T-1}{2n_T-1}  }{\phi }} \frac{\left(s-x-
\frac{n_T-1}{2n_T-1} \right)^{n-2}}{s^{2n_T-1}}~ {\rm d}s\\
\label{eq:V_2} & = e^{ \frac{\frac{n_T-1}{2n_T-1}}{\phi }}
\left(\frac{1}{\phi} \right)^{\frac{2n_T-n-1}{2}}\Gamma(n-1) \nonumber\\
&  \hspace{0.5cm}\times  \left( x+ \frac{n_T-1}{2n_T-1}\right)^{
\frac{n-2n_T-1}{2}} \exp \left(- \frac{x+ \frac{n_T-1}{2n_T-1}
}{2\phi} \right) \nonumber\\
&  \hspace{0.5cm} \times W_{\frac{-2n_T-n+3}{2}, \frac{2n_T-n}{2}}
\left(\frac{ x+ \frac{n_T-1}{2n_T-1}}{\phi}
\right)\\
& =  \Gamma(n-1)\left(\frac{1}{\phi} \right)^{\frac{2n_T-n-1}{2}}
e^{ \frac{\frac{n_T-1}{2n_T-1}}{2 \phi }}\nonumber\\
& \hspace{0.5cm} \times \left( x+ \frac{n_T-1}{2n_T-1}\right)^{ -
\frac{2 n_T-n+1}{2}}  \exp \left(- \frac{x}{2\phi} \right) \nonumber\\
& \hspace{1cm} \times W_{\frac{-2n_T-n+3}{2},
\frac{2n_T-n}{2}}\left(\frac{ x+ \frac{n_T-1}{2n_T-1}}{\phi} \right)
,
\end{align}
where the change of variables $s=  t + \frac{n_T-1}{2n_T-1}$ is
employed in (\ref{eq:V_1}), whilst (\ref{eq:V_2}) is obtained by
using \cite[3.383.4]{Gradshteyn2000}. In addition, we have
\begin{align}
& V(n-1;-n_T+2;-n_T+1;\frac{1}{\phi}; x) \nonumber\\
& < \int_{x}^{\infty} e^{-\frac{t}{\phi }}
\frac{(t-x)^{n-2}}{t^{2n_T-1}}{\rm d}t\nonumber\\
\label{eq:V_3} &= \Gamma(n-1)\left(\frac{1}{\phi}
\right)^{\frac{2n_T-n-1}{2}}x^{ - \frac{2 n_T-n+1}{2}}\exp \left(-
\frac{x}{2\phi} \right) \nonumber\\
& \hspace{1cm} \times W_{\frac{-2n_T-n+3}{2}, \frac{2
n_T-n}{2}}\left(\frac{ x}{\phi} \right),
\end{align}
where we have used \cite[3.383.4]{Gradshteyn2000} again to obtain
(\ref{eq:V_3}). Then, for $ \frac{1}{\delta} < x < \infty$ and
$n\geq 2$, $F_{\bar{\gamma}(n)}(x) $ and $F_{\tilde{\gamma}(n)}(x) $
are obtained by individually substituting (\ref{eq:V_2}) and
(\ref{eq:V_3}) into (\ref{eq:cdf_exact2}) with $\Gamma(n-1)=
(n-2)!$. In addition, it is easy to prove that
$F_{\tilde{\gamma}(n)}(0) =0$, $F_{\tilde{\gamma}(n)}(1) = 1$ and
$F_{\tilde{\gamma}(n)}(x)$ is an increasing function of $x$. Thus,
$F_{\tilde{\gamma}(n)}(x)$ is a distribution function. For the same
reasons, $F_{\bar{\gamma}(n)}(x)$ is also a distribution function.
\emph{Lemma} \ref{lemma:cdf_upper_lower} is proved.

\section{Asymptotic expansion of c.d.f.s of $\tilde{\gamma}_k(n)$ and $\bar{\gamma}_k(n)$
for large $x$.}\label{app:lemma:tail_gamma_n} Using the asymptotic
expansion of the Whittaker function $W_{\lambda, \mu}(x)$ for large
$x$ given by \cite[9.227]{Gradshteyn2000}, we have
\begin{align}\label{eq:expansion_1}
& W_{\frac{-2n_T-n+3}{2}, \frac{2n_T-n}{2}}\left(\frac{ x+
\frac{n_T-1}{2n_T-1}}{\phi} \right) \nonumber\\& = \exp\left(-\frac{
x+ \frac{n_T-1}{2n_T-1}}{2 \phi}\right)\left(\frac{ x+
\frac{n_T-1}{2n_T-1}}{\phi}\right)^{\frac{-2n_T-n+3}{2}} \nonumber\\
& \hspace{1cm} \times \left( 1+ O \left(\frac{1}{ x+
\frac{n_T-1}{2n_T-1}} \right) \right)
\end{align}
and
\begin{align}\label{eq:expansion_2}
& W_{\frac{-2n_T-n+3}{2}, \frac{2n_T-n}{2}}\left(\frac{ x}{\phi}
\right) \nonumber\\
&  = \exp
\left(-\frac{x}{2\phi}\right)\left(\frac{x}{\phi}\right)^{\frac{-2n_T-n+3}{2}}
\left( 1+ O \left(\frac{1}{ x} \right) \right).
\end{align}
Substituting (\ref{eq:expansion_1}) and (\ref{eq:expansion_2}) into
(\ref{eq:cdf_lower}) and (\ref{eq:cdf_upper}) respectively, after
some manipulations we obtain (\ref{eq:tail_lower}) and
(\ref{eq:tail_upper}).

\section{Proof of \emph{Lemma}
\ref{lemma:max_gamma_bound}}\label{proof:lemma_max_gamma_bound}

Applying the results in \cite[Appendix I]{Mohammad08} to the
distribution of $\tilde{\gamma}_{k}(n)$ in (\ref{eq:cdf_lower}) with
$a_{K,n}= \phi$ and $b_{K,n} = \phi \log \big( \frac{c_n
K}{\phi^{n_T-1}}\big) - \phi (n_T+n-2) \log\log\big(\frac{c_n
K}{\phi^{n_T-1}}\big)$, we have the asymptotic results
(\ref{eq:asym_distribution}) at the top of next page.
\begin{figure*}[!t]
\begin{align}\label{eq:asym_distribution}
&\lim_{K \rightarrow \infty} K \left( 1 -
F_{\tilde{{\gamma}}(n)}(a_{K,n}x+b_{K,n})\right) = \lim_{K
\rightarrow \infty} \frac{c_n \phi^{n-1} K \exp
\left(-\frac{a_{K,n}x+b_{K,n}}{\phi}\right)}
{\left(a_{K,n}x+b_{K,n} \right)^{n_T+n-2}} \nonumber\\
& \hspace{3mm} = \lim_{K \rightarrow \infty} \frac{ c_n \phi^{n-1} K
\exp \left[-x - \log \left(\frac{c_n K}{\phi^{n_T-1}} \right)
+(n_T+n-2) \log\log \left(\frac{c_n K}{\phi^{n_T-1}} \right)
\right]}{\left[ \phi x + \phi \log \big( \frac{c_n
K}{\phi^{n_T-1}}\big)- \phi (n_T+n-2) \log\log\big(\frac{c_n
K}{\phi^{n_T -1}}\big)
\right]^{n_T+n-2}}\nonumber\\
& \hspace{3mm} =  \lim_{K \rightarrow \infty} \frac{ e^{-x}
\left(\log \left(\frac{c_n K}{\phi^{n_T-1}} \right)
\right)^{n_T+n-2}}{\left[
 x +  \log \big( \frac{c_n K}{\phi^{n_T-1}}\big)-
(n_T+n-2) \log\log\big(\frac{c_n K}{\phi^{n_T -1}}\big)
\right]^{n_T+n-2}}\nonumber\\
& \hspace{3mm} =e^{-x}.
\end{align}
\hrulefill
\end{figure*}
Similarly, we can show that $\lim_{K \rightarrow \infty} K \left( 1
- F_{\bar{{\gamma}}(n)}(a_{K,n}x+b_{K,n})\right) =e^{-x}$. Borrowing
the language from extreme value theory of order statistics,
$F_{{{\tilde{\gamma}}}(n)}(x)$ and $F_{\bar{{\gamma}}(n)}(x)$ are in
the domain of attraction of type (III) limiting distribution
\cite[\emph{Theorem} 4]{Mohammad08}. The final results can be
obtained by utilizing a similar method used in \cite[\emph{Lemma}
7]{Mohammad08}.

\section{Proof of Lemma
\ref{lemma:sinr_up_low}}\label{proof:lemma_sinr_up_low} Recall that
$F_{{\tilde{\gamma}}_{\mathcal{S}(n)}}(x) \leq
F_{{\gamma}_{\mathcal{S}(n)}}(x) \leq
F_{{\bar{\gamma}}_{\mathcal{S}(n)}}(x) $. For
$\gamma_{\mathcal{S}(n)}, n \in \{ 2,\ldots, M \}$, and large $K$,
with (\ref{eq:extre_gamma_bar}), $\text{Pr}\{ \chi_n - \phi \log\log
\sqrt K \leq \gamma_{\mathcal{S}(n)} \} \geq \text{Pr}\{ \chi_n -
\log\log \sqrt K \leq \bar{\gamma}_{\mathcal{S}(n)}\} \geq 1-
O\bigg( \frac{1}{\log K} \bigg)$. Similarly, with
(\ref{eq:extre_gamma_til}) we have $\text{Pr}\{
\gamma_{\mathcal{S}(n)} \leq \chi_n + \phi \log\log \sqrt K \} \geq
\text{Pr}\{ \tilde{\gamma}_{\mathcal{S}(n)}\leq  \chi_n + \log\log
\sqrt K \} \geq 1- O\bigg( \frac{1}{\log K} \bigg)$. Thus,
(\ref{eq:sinr_up_low}) holds.
For the case $n = 1$, the asymptotic distribution of
$\gamma_{\mathcal{S}(1)}$ is obtained in \cite[\emph{Theorem}
1]{Yoo_limited} as
\begin{eqnarray}\label{eq:sinr1_up_low}
& \hspace{-4cm}\text{Pr}\{ \chi_1 - \phi \log\log \sqrt{K} \leq
\gamma_{\mathcal{S}(1)} \nonumber\\
& \leq \chi_1 + \phi \log\log \sqrt{K}\} \geq 1 -
O\bigg(\frac{1}{\log K}\bigg).
\end{eqnarray}
The lemma is proved by combining (\ref{eq:sinr_up_low}) and
(\ref{eq:sinr1_up_low}).


\section{Proof of Theorem
\ref{theorem:sum_rate_THP_feedback}}\label{proof:theorem_sum_rate_THP_feedback}
Using (\ref{eq:sinr_up_low}) we can obtain $ \text{Pr}\bigg\{
\frac{\log_2 (1 +  \chi_n - \phi \log\log\sqrt{K})}{ \log_2{[\varrho
\log K]} }  \leq \frac{ \log_2 (1+
{\gamma}_{\mathcal{S}(n)})}{\log_2[\varrho \log K]} \leq
\frac{\log_2 (1 + \chi_{n} + \phi\log\log\sqrt{K})}{ \log_2{[\varrho
\log K]} } \bigg\} \geq 1- O \bigg(\frac{1}{\log K} \bigg)$.
Substituting (\ref{eq:chi_n}) and (\ref{eq:chi_1}) and letting $K
\rightarrow \infty$, the left-hand side and the right-hand side
inequality within $\text{Pr} \{\cdot \}$ converge to the same value.
Thus, $\lim_{K \rightarrow \infty} \frac{ \log_2 (1+
{\gamma}_{\mathcal{S}(n)})}{\log_2[ \varrho \log K]} = 1 $ with
probability 1, and (\ref{eq:converge_1}) holds. To establish
(\ref{eq:converge_2}), we employ the upper bound on
$\mathbb{E}\{R_{\text{BC}}\}$ derived in \cite{Sharif05}, which is
given as $\mathbb{E} \{R_{\text{BC}}\} \leq n_T \log_2 \big( 1 +
\varrho (\log K + O(\log \log K)) \big)$.
From \emph{Lemma} \ref{lemma:sinr_up_low}, we have $\text{Pr}\bigg\{
\log_2 (1+ {\gamma}_{\mathcal{S}(n)}) \geq \log_2 (1 + \chi_{n} -
\phi\log\log\sqrt{K}) \bigg\} \geq 1- O \bigg(\frac{1}{\log K}
\bigg)$. Thus, we have (\ref{eq:order_difference_THP}) at the top of
this page,
\begin{figure*}[!t]
\begin{eqnarray}\label{eq:order_difference_THP}
&& \hspace{-0.4cm}\mathbb{E} \{ R_{\text{BC}}\} - \mathbb{E} \{
R_{\text{G-THP-Q}} \} \nonumber\\
&& \leq  n_T \log_2 \big( 1 + \varrho (\log K + O(\log \log K))
\big) - \bigg( 1 - O \bigg(\frac{1}{\log K } \bigg)
\bigg)\sum_{n=1}^{n_T} \log_2 \big(1 + \chi_{n} -
\phi\log\log\sqrt{K}
\big) \nonumber\\
&&= \sum_{n=1}^{n_T} \log_2 \big(\frac{1 + \varrho (\log K + O(\log
\log K))} {1 + \chi_{n} - \phi\log\log\sqrt{K}}\big)+ O
\bigg(\frac{1}{\log K } \bigg) \sum_{n=1}^{n_T} \log_2 \big(1
+ \chi_{n} - \phi\log\log\sqrt{K} \big)\nonumber \\
&& \sim \sum_{n=1}^{n_T} \log_2\bigg(1 + \frac{\frac{P}{M n_T}\log
K+ O \left(\log\log{K}\right)+ \phi \log \left(
\frac{\phi^{n_T-1}}{c_n}\right) }{1 + \frac{ (M-1)P}{M n_T}\log K +O
\left(\log\log{K}\right)}\bigg)  +  O
\bigg(\frac{1}{\log K } \bigg)  O(\log \log K ) \nonumber\\
\label{eq:difference} && \sim  \frac{n_T}{\log2} \log \bigg(1+
\frac{1}{M-1} + O \left(\frac{\log\log K}{\log K} \bigg)  + O
\left(\frac{1}{\log K} \right) \right) + O \bigg(\frac{\log \log
K}{\log K}
\bigg) . 
\end{eqnarray}
\hrulefill
\end{figure*}
where $x\sim y$ means $\lim_{K \rightarrow \infty} x/y = 1$. Thus
(\ref{eq:converge_2}) is proved. Equation (\ref{eq:converge_3}) is
proved by combining (\ref{eq:difference}) and the fact that
$\log(1+x) \approx x$ for $x \ll 1$.

\section{Proof of \emph{Lemma}
\ref{lemma:cdfn_highSNR}}\label{proof:lemma_cdfn_highSNR} When $x
\geq \frac{1}{\delta} - 1$ and $\frac{x}{y} \geq \frac{1}{\delta}
-1$, $y \leq \frac{x}{\frac{1}{\delta} - 1} $ and
$\frac{x}{\frac{1}{\delta} - 1} \geq 1$. In addition, $\omega_k(n) $
and $Z
\defeq \frac{ \cos^2\theta_k}{ \sin^2\theta_k} $ are independent, thus
the c.d.f. of $ \hat{\gamma}_{k}(n)$ can be obtained as
\begin{align}\label{eq:cdfn_highSNR_derive1}
& F_{\hat{\gamma}(n)}(x) \nonumber \\ &= \int_{0}^{1} F_{Z}
\bigg(\frac{x}{y} \bigg)
f_{\omega(n)} (y) {\rm d} y ,\nonumber\\
&= \int_{0}^{1} \left( 1-
\frac{2^B}{\left(\frac{x}{y}+1\right)^{n_T-1}} \right)
f_{\omega(n)} (y) {\rm d} y\nonumber\\
&= 1- \int_{0}^{1} \frac{2^B}{\left(\frac{x}{y}+1\right)^{n_T-1}}
\frac{1}{\beta(n_T-n+1,n-1)} \nonumber\\ & \hspace{1cm}\times y^{n_T-n}(1- y)^{n-2} \nonumber\\
& =1- \frac{2^B}{\beta(n_T-n+1,n-1)} \int_{0}^{1} \frac{y^{2n_T -n
-1}  y^{n-2}}{(x+y)^{n_T-1}}{\rm d} y \nonumber \\
& = 1- \frac{2^B \beta(2n_T -n, n-1)}{\beta(n_T-n+1,n-1)}
\frac{1}{x^{n_T-1}} \nonumber \\ & \hspace{1cm}\times
{}_2F_1\left(n_T-1,2n_T-n;2n_T-1; -\frac{1}{x} \right),
\end{align}
where to obtain the last line we have used
\cite[3.197.3]{Gradshteyn2000}.

\section{Proof of Lemma
\ref{lemma:extream_cdfn_highSNR}}\label{proof:lemma_extream_cdfn_highSNR}
First, for $n=1$ the extremal distribution in
(\ref{eq:extre_cdf1_highSNR}) has been obtained in
\cite{Yoo_limited}. By applying (\ref{eq:tail_cdfn_highSNR}) with
the change of variables $\acute{\gamma}(n)=
\left(\frac{2^B}{d_n}\right)^{\frac{1}{n_T-1}}\hat{\gamma}(n) -1$,
we have
\begin{align}
 F_{\acute{\gamma}(n)} (x) & = F_{\hat{\gamma}(n)}\left(
\frac{x+1}{\left( \frac{2^B}{d_n}\right)^{\frac{1}{n_T-1}}}\right)
\nonumber \\ & = 1 - \frac{2^B}{(1+x)^{n_T-1}} + O\left(
\frac{1}{x}\right),
\end{align}
which is the same as the tail distribution of
$F_{\hat{\gamma}(1)}(x)$. Using (\ref{eq:extre_cdf1_highSNR}) we
have
\begin{align}
&\text{Pr} \bigg\{ \left(\frac{2^B
K}{\log\sqrt{K}}\right)^{\frac{1}{n_T-1}}  - 1 \leq \left(
\frac{2^B}{d_n}\right)^{\frac{1}{n_T-1}}
\hat{\gamma}_{\mathcal{S}(n)} -1 \nonumber\\ & \hspace{0.5cm} \leq
\left(2^B K\log\sqrt{K}\right)^{\frac{1}{n_T-1}}-1 \bigg\} \geq 1- O
\left(\frac{1}{\log K}\right).
\end{align}
Thus (\ref{eq:extre_cdfn_highSNR}) follows.

\bibliographystyle{IEEEtran}
\bibliography{sunliang_bib}

\end{document}